\newtheorem{thm}{\protect\theoremname}[section]
\newtheorem{defn}[thm]{\protect\definitionname}
\newtheorem{rem}[thm]{\protect\remarkname}
\newtheorem{lem}[thm]{\protect\lemmaname}
\newtheorem{prop}[thm]{\protect\propositionname}
\newtheorem{cor}[thm]{Corollary}
\newtheorem{asm}[thm]{Assumption}
\newcommand{\norm}[1]{\lVert #1 \rVert}
\newcommand{\abs}[1]{| #1 |}
\DeclareMathOperator{\diag}{diag}
\DeclareSymbolFont{bbold}{U}{bbold}{m}{n}
\DeclareSymbolFontAlphabet{\mathbbold}{bbold}
\newcommand{\onev}{\mathbbold{1}}
\DeclareMathAlphabet{\pazocal}{OMS}{zplm}{m}{n}
\renewcommand{\mathcal}[1]{\pazocal{#1}}
\title{\LARGE \bf Dynamic Analysis of Bet-Hedging Strategies as a Protection Mechanism against Environmental Fluctuations}
\author{Masaki Ogura$^{1}$, Masashi Wakaiki$^{2}$, and Victor M.~Preciado$^{1}$
\thanks{$^{1}$M.~Ogura and V.~M.~Preciado are with the Department of Electrical and Systems Engineering, University of Pennsylvania, PA 19104, USA. 
{\tt\small \{ogura,preciado\}@seas.upenn.edu}}%
\thanks{$^{2}$M.~Wakaiki is with the Department of Electrical and Electronic Engineering, Chiba University, 
1-33 Yayoi-cho, Inage-ku, Chiba 263-8522, Japan. {\tt \small  wakaiki@chiba-u.jp}}%
\thanks{This work was supported in part by the NSF under grants CNS-1302222 and IIS-1447470.}%
}
\providecommand{\definitionname}{Definition}
\providecommand{\lemmaname}{Lemma}
\providecommand{\propositionname}{Proposition}
\providecommand{\remarkname}{Remark}
\providecommand{\theoremname}{Theorem}
\newenvironment{proofof}[1]{
\begin{proof}}{\end{proof}
}
\newcommand{\afterequation}{\vskip 3pt}
\begin{document}

\maketitle
\thispagestyle{empty}
\pagestyle{empty}

\begin{abstract}
In order to increase their robustness against environmental fluctuations, many
biological populations have developed \emph{bet-hedging} mechanisms in which the
population `bets' against the presence of prolonged favorable environmental
conditions by having a few individual behaving as if they sensed a threatening
or stressful environment. As a result, the population (as a whole) increases its
chances of surviving environmental fluctuations in the long term, while
sacrificing short-term performance. In this paper, we propose a theoretical
framework, based on Markov jump linear systems, to model and evaluate the
performance of bet-hedging strategies in the presence of stochastic
fluctuations. We illustrate our results using numerical simulations.
\end{abstract}

\section{Introduction}
Biological populations, such as bacterial colonies, are subject to multiple sources of environmental fluctuations, from regular cycles of daily light and temperature to irregular
fluctuations of nutrients and pH levels~\cite{Kussell2005,Acar2008,Saether2015}. In order to increase their robustness against environmental fluctuations, many biological systems have developed \emph{bet}\nobreakdash-\emph{hedging} mechanisms~\cite{Saether2015,Seger1987} in which the population `bets' against the presence of prolonged favorable environmental conditions by having a few individual behaving as if they sensed a threatening or stressful environment.  For example, in bacterial colonies, some bacteria may stochastically switch into a state of slow metabolic state, in which they are more robust against pH fluctuations. As a result, the population (as a whole) increases its chances of surviving pH fluctuations in the long term, while sacrificing short-term performance. Similar bet-hedging strategies can be found in many other biological systems, such as the lysis-lysogeny switch of bacteriophage $\lambda$ \cite{Oppenheim2005}, delayed germination in plants~\cite{Gremer2014}, and phenotypic variations in bacteria~\cite{Woude2004}.

In this paper, we pay special attention to a particular type of bet-hedging
mechanism based on introducing delays in the function of a few individuals in
the population. For example, in the case of cell populations, the presence of
time-delays in some basic patterns of cell proliferation can significantly
improve the overall population fitness~\cite{Baker1998}. Similarly, delayed
germination in plant populations~\cite{Gremer2014} and delayed disease
activation of viruses~\cite{Stumpf2002} have also been reported as bet-hedging
strategies in biological systems. In the current literature, the performance of
bet-hedging strategies is evaluated using either extensive numerical
simulations, or overly simplistic assumptions. Based on numerical simulations,
the authors in \cite{Belete2015} found the optimal rates of adaptation (e.g.,
the rate at which bacteria switch into a slow metabolic state) to maximize the
growth rate of cell populations. Based on overly simplistic assumptions,
analytical calculations of growth rates of phenotypically heterogeneous
populations are performed by assuming that environmental fluctuations are either
slow enough~\cite{Kussell2005}, fast enough~\cite{Muller2013a}, or
periodic~\cite{Gaal2010}. Although the works mentioned above provide intuitive
explanations about the effects of bet-hedging strategies, there is still a lack
of a solid mathematical framework for the evaluation of bet-hedging strategies
under complex environmental fluctuations.

The aim of this paper is to present a rigorous and tractable framework to
quantify the growth rates of cell populations using bet-hedging strategies involving
time-delays. Building on the models in the literature
\cite{Thattai2004,Kussell2005,Belete2015}, we introduce a population model in
terms of positive Markov jump linear systems~\cite{Ogura2013f} with delays.
Among various types of delays, we specifically focus on those in proliferation
(i.e., in the state variables) and in adaptation to environmental changes (i.e.,
in the switching signals). In the former case, we show that the growth rate of
a population exhibiting both point and distributed delays is upper-bounded
by the maximum real eigenvalue of a particular Metzler matrix. In the
latter case, we consider stochastic delays in adaptation to environmental
fluctuation and show that the growth rate coincides with the maximum real
eigenvalue of a Metzler matrix. The proposed framework can also be used to study both point and distributed delays in the state variables in a unified manner, whereas these delays have been studied separately in the literature~\cite{Qi2015a,Jiao2015}.

This paper is organized as follows. After presenting the notation in Section~\ref{sec:Bet-Hedging-Populations-under}, we introduce several linear
growth models of bet-hedging populations involving time delays. Then, in
Section~\ref{sec:delayedProlife}, we derive an upper bound on the growth rates for
the case of delayed proliferation. Section~\ref{sec:delayedResponse} shifts our
focus to delayed adaptation and shows that the growth rate equals the maximum
eigenvalue of a Metzler matrix. Numerical simulations are presented in
Section~\ref{sec:Numerical-Simulations}.

\subsection{Mathematical Preliminaries}

We denote by $\mathbb{R}_+$ the set of nonnegative real numbers. The $1$-norm of
$x\in\mathbb{R}^{n}$ is defined by $\norm{x}=\sum_{i=1}^{n}\abs{x_i}$. The
symbol~$\onev_{n}$ denotes the column vector of length $n$ whose entries are all
one. By $u_i$, we denote the $i$-th canonical basis vector in~$\mathbb{R}^n$.
Let $U_{ij} = u_iu_j^\top$. We say that a matrix is nonnegative if its entries
are nonnegative. A square matrix is said to be \emph{Metzler} if its
off-diagonal entries are nonnegative. We say that a matrix is Hurwitz stable if
all of its eigenvalues have negative real parts. The Kronecker
product~\cite{Brewer1978} of matrices is denoted by $\otimes$. It is
known~\cite{Brewer1978} that, if the standard product of matrices $AB$ and $CD$
are well-defined, then
\begin{equation}\label{eq:ABCD}
(AB)\otimes (CD) = (A\otimes B)(C\otimes D).
\end{equation}
For a closed interval~$[a,b]$, the space~$C([a, b],\mathbb{R}^{n}_+)$ is
defined as the set of $\mathbb{R}^{n}_+$-valued continuous functions on
$[a,b]$ equipped with the norm $\norm{x} = \int_{a}^{b}\norm{x(t)}\,dt$.

Let $A_0$ be an $n\times n$ Metzler matrix, $A_1$, $\dotsc$, $A_m$ be $n\times
n$ nonnegative matrices, $h_1$, $\dotsc$, $h_m$ be nonnegative constants, and
$B\colon [0, \infty) \to \mathbb{R}_+^{n\times n}$ be a continuous function
having a compact support. Consider the positive linear system with delay
\cite{Ngoc2013}:
\begin{equation} \label{eq:LTIDelaysys}
\dfrac{dx}{dt}=A_0x(t)+ \sum_{i=1}^m A_ix(t-h_i)+(B*x)(t),
\end{equation}
where $*$ denotes the convolution. Let $T$ be the minimum number such that
$T\geq h_i$ for every $i\in \{1, \dotsc, m\}$ and the interval~$[0, T]$ contains
the support of $B$. We set the initial condition of
\eqref{eq:LTIDelaysys} as $x|_{[-T, 0]} = \phi$ for a function~$\phi \in
C([-T,0], \mathbb{R}^{n}_+)$. We say that the system~\eqref{eq:LTIDelaysys} is
\emph{exponentially stable} if there exist $C> 0$ and $\lambda < 0$ such that
$\norm{x(t)}\leq C e^{\lambda t}\norm{\phi}$ for all $\phi$ and $t\geq 0$. The
following stability characterization is given in \cite{Ngoc2013}:

\begin{prop}[{\cite[Theorem~III.1]{Ngoc2013}}] \label{prop:Ngoc}
The system~\eqref{eq:LTIDelaysys} is exponentially stable if and only if the
matrix $A_0+\sum_{i=1}^m A_i+\int_{0}^{\infty}B(\tau)\,d\tau$ is Hurwitz stable.
\end{prop}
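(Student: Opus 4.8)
\smallskip
\noindent\emph{Proof sketch.}
The plan is to reduce the statement to a scalar monotonicity argument organized around a single real quantity, the principal growth exponent of the system. For $\mu\in\mathbb{R}$ define the matrix
\[
C(\mu) := A_0 + \sum_{i=1}^m e^{-\mu h_i}A_i + \int_{0}^{\infty} e^{-\mu\tau}B(\tau)\,d\tau,
\]
which is well defined for every $\mu$ because $B$ has compact support, and which is Metzler because $A_0$ is Metzler while the matrices $e^{-\mu h_i}A_i$ and $e^{-\mu\tau}B(\tau)$ are nonnegative. Let $\rho(\mu)$ denote the spectral abscissa of $C(\mu)$, which by Perron--Frobenius theory is a real eigenvalue of $C(\mu)$ admitting a nonnegative eigenvector. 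Note that $C(0)$ equals precisely the matrix appearing in the statement. I would first record the elementary facts that each entry of $C(\mu)$ is nonincreasing in $\mu$, that $\rho(\mu)$ therefore depends continuously on $\mu$ and is nonincreasing (monotonicity of the Perron root of a Metzler matrix under the entrywise order), and hence that $g(\mu):=\rho(\mu)-\mu$ is continuous and \emph{strictly} decreasing. Since $C(\mu)\ge C(0)$ entrywise for $\mu<0$ one has $g(\mu)\ge\rho(0)-\mu\to+\infty$ as $\mu\to-\infty$, while $C(\mu)$ converges to a fixed Metzler matrix as $\mu\to+\infty$ so that $g(\mu)\to-\infty$; thus $g$ has a unique zero $\mu_\star$, characterized by $\rho(C(\mu_\star))=\mu_\star$.

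The proposition then follows from two claims. First, because $g$ is strictly decreasing with $g(\mu_\star)=0$, we have $\mu_\star<0$ if and only if $g(0)<0$, that is, if and only if $\rho(C(0))<0$, i.e.\ if and only if the matrix $A_0+\sum_{i=1}^m A_i+\int_0^{\infty}B(\tau)\,d\tau$ is Hurwitz stable. Second, the system~\eqref{eq:LTIDelaysys} is exponentially stable if and only if $\mu_\star<0$. For the ``only if'' part of this second claim I would argue by contraposition: if $\mu_\star\ge 0$, take a nonzero nonnegative eigenvector $w$ of $C(\mu_\star)$ for the eigenvalue $\mu_\star$; a direct substitution shows that $x(t)=e^{\mu_\star t}w$ solves~\eqref{eq:LTIDelaysys} with the nonzero initial condition $\phi\in C([-T,0],\mathbb{R}^n_+)$ given by $\phi(s)=e^{\mu_\star s}w$, and since $\norm{x(t)}=e^{\mu_\star t}\norm{w}$ does not tend to $0$, the system is not exponentially stable.

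For the ``if'' part I would use a comparison (super-solution) argument. Assume $\mu_\star<0$ and fix $\lambda\in(\mu_\star,0)$, so that $\rho(C(\lambda))<\lambda$ and hence $C(\lambda)-\lambda I$ is a Hurwitz Metzler matrix; consequently there is a strictly positive vector $v$ with $C(\lambda)v<\lambda v$ componentwise. A short computation gives, for $\bar x(t):=e^{\lambda t}v$,
\[
\frac{d\bar x}{dt}-\Bigl(A_0\bar x(t)+\sum_{i=1}^m A_i\bar x(t-h_i)+(B*\bar x)(t)\Bigr)=e^{\lambda t}\bigl(\lambda v-C(\lambda)v\bigr)>0,
\]
so $\bar x$ is a strict super-solution of~\eqref{eq:LTIDelaysys}. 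Invoking positivity of the delayed dynamics --- the right-hand side of~\eqref{eq:LTIDelaysys}, viewed as a functional of the history, is quasimonotone because $A_0$ is Metzler and the matrices $A_i$ and $B(\cdot)$ are nonnegative, so solutions are monotone with respect to the initial data --- one concludes $x(t)\le\alpha\,e^{\lambda t}v$ for all $t\ge 0$ whenever $\alpha>0$ is chosen so large that $\phi\le\alpha v$ on $[-T,0]$, which yields the exponential bound $\norm{x(t)}\le C e^{\lambda t}\norm{\phi}$ for a suitable $C>0$.

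I expect the last step to be the main obstacle: setting up the positivity and comparison principle for~\eqref{eq:LTIDelaysys} with both point and distributed delays present, verifying the quasimonotonicity of the right-hand side (where the merely Metzler --- rather than nonnegative --- structure of $A_0$ must be handled by noting that the relevant comparison is made at equal values of the current state), and carrying out the bookkeeping that turns the componentwise estimate $x(t)\le\alpha e^{\lambda t}v$ into a bound phrased in terms of the norm on $C([-T,0],\mathbb{R}^n_+)$. The remaining ingredients --- continuity and monotonicity of $\rho(\mu)$, and the standard characterizations of Hurwitz Metzler matrices used above --- are routine consequences of Perron--Frobenius theory.
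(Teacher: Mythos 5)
This proposition is not proved in the paper at all: it is imported verbatim from \cite[Theorem~III.1]{Ngoc2013}, so there is no in-paper argument to compare against. Your sketch is a legitimate self-contained route to the result, and it is essentially the standard one for positive delay systems: reduce the problem to the real ``characteristic matrix'' $C(\mu)$, use Perron--Frobenius monotonicity to locate the unique fixed point $\mu_\star$ of $\mu\mapsto\rho(C(\mu))$, exhibit the explicit exponential solution $e^{\mu_\star t}w$ to rule out stability when $\mu_\star\ge 0$, and use a strictly positive supersolution $e^{\lambda t}v$ together with the quasimonotonicity of the right-hand side to get the decay estimate when $\mu_\star<0$. All the individual steps check out: $C(\mu)$ is Metzler and entrywise nonincreasing in $\mu$, so $g(\mu)=\rho(\mu)-\mu$ is continuous and strictly decreasing with the stated limits; the equivalence $\mu_\star<0\iff\rho(C(0))<0$ is immediate; the eigenvector solution in the ``only if'' direction genuinely solves \eqref{eq:LTIDelaysys} (the compact support of $B$ keeps the convolution within the available history); and the quasimonotonicity needed for the comparison principle holds because the off-diagonal part of $A_0$ and all of $A_i$, $B(\cdot)$ are nonnegative, exactly as you note. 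Compared with Ngoc's own proof (which works through the characteristic equation and positivity of the fundamental solution), your comparison-principle argument is arguably more elementary, at the cost of invoking the monotone-systems machinery for functional differential equations that you flag as the remaining obstacle.

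One caveat deserves explicit attention. Your final estimate is naturally $\norm{x(t)}\le Ce^{\lambda t}\norm{\phi}_\infty$, since $\alpha$ must dominate $\max_s\max_k\phi_k(s)/v_k$. The paper, however, equips $C([-T,0],\mathbb{R}^n_+)$ with the integral norm $\norm{\phi}=\int_{-T}^0\norm{\phi(s)}\,ds$, and a sup-norm bound cannot be converted into an $L^1$-norm bound uniformly over continuous $\phi$ (a tall narrow spike near $s=0$ has small integral norm but forces $\norm{x(0)}$ to be large). So the ``bookkeeping'' you defer is not merely routine under the paper's literal definition; the estimate you obtain is the one in the sup norm, which is the norm used in the cited source. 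This is a defect of the paper's restatement rather than of your argument, but it should be called out rather than absorbed silently.
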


We also give a review on random variables and stochastic processes. Let
$(\Omega,\mathcal{M},P)$ be a probability space. For an integrable random
variable $X$ on $\Omega$, its expected value is denoted by $E[X]$. If
$\mathcal{M}_{1}\subset\mathcal{M}$ is a $\sigma$-algebra, then
$E[X\mid\mathcal{M}_{1}]$ denotes the conditional expectation of $X$ given
$\mathcal{M}_{1}$. It is well known (see, e.g., \cite{Borkar1995}) that
\begin{equation} \label{eq:filtering}
E[X] = E[ E[X\mid\mathcal M_1] ]. 
\end{equation}
Let $f, g\colon \mathbb{R}\to\mathbb{R}^n$ be continuous
functions having their compact supports in $[0, \infty)$. Let $N_1$, $\dotsc$,
$N_m$ denote Poisson counters. We say that a left-continuous function $x$ taking
values in $\mathbb{R}^n$ is a solution of the stochastic differential equation
\begin{equation}
dx=(f*x)\,dt+ \sum_{i=1}^m (g_i*x)\,dN_i \label{eq:SIE}
\end{equation}
if we have $dx/dt=(f*x)(t)$ when no jump occurs at time~$t$, and
$x(t^{+})=x(t^{-})+(g_i*x)(t^{-})$ when $N_i$ jumps at time~$t$. For the sake of
completeness, we state the It\^o rule for stochastic differential equations with
Poisson jumps (see, e.g., \cite{Brockett2009}):

\begin{lem}\label{lem:Ito}
Assume that $x$ follows the stochastic differential equation~\eqref{eq:SIE}. Let
$\psi\colon\mathbb{R^{\mathit{n}}\to\mathbb{R}^{\mathit{m}}}$ be a
differentiable function and define $y(t)=\psi(x(t))$ for every $t \geq 0$. Then,
$y$ follows the stochastic differential equation $dy = ({d\psi}/{dx})(f*x)\,dt +
\sum_{i=1}^{m}\left(\psi(x+(g_i*x))-\psi(x)\right)\,dN_i$.
\end{lem}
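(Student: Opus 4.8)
The plan is to verify directly that $y=\psi\circ x$ satisfies the two requirements in the definition of a solution of a Poisson-jump stochastic differential equation stated just above Lemma~\ref{lem:Ito}: namely, that the ordinary differential equation $dy/dt=(d\psi/dx)(x(t))\,(f*x)(t)$ holds at every instant at which no jump occurs, and that $y(t^{+})=y(t^{-})+\bigl(\psi(x+(g_i*x))-\psi(x)\bigr)(t^{-})$ whenever the counter~$N_i$ jumps at time~$t$. Both are pathwise statements, so throughout I fix a sample path in the probability-one set on which each~$N_i$ has only finitely many jumps on every bounded interval and no two of $N_1,\dotsc,N_m$ jump at the same instant; this set has full measure by standard properties of independent Poisson processes.

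First I would treat the continuous part. On any open interval containing no jump of any~$N_i$, the path of~$x$ solves $dx/dt=(f*x)(t)$; since~$x$ is left-continuous with finitely many jumps, hence locally bounded and Borel measurable, and~$f$ is continuous with compact support, the convolution~$f*x$ is continuous, so~$x$ is continuously differentiable on that interval. As~$\psi$ is differentiable, the classical chain rule yields $\dot y(t)=(d\psi/dx)(x(t))\,\dot x(t)=(d\psi/dx)(x(t))\,(f*x)(t)$ there, which is exactly the drift asserted in the lemma. Left-continuity of~$x$ together with continuity of~$\psi$ also gives left-continuity of~$y$, as required of a solution.

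Next I would treat the jumps. If~$N_i$ jumps at time~$t$, then by the definition of a solution of~\eqref{eq:SIE} we have $x(t^{+})=x(t^{-})+(g_i*x)(t^{-})$, with no contribution from the other counters since they do not jump at~$t$. Applying~$\psi$ gives $y(t^{+})=\psi\bigl(x(t^{-})+(g_i*x)(t^{-})\bigr)$, while $y(t^{-})=\psi(x(t^{-}))$; subtracting, $y(t^{+})-y(t^{-})=\psi\bigl(x(t^{-})+(g_i*x)(t^{-})\bigr)-\psi(x(t^{-}))$, which is precisely the coefficient of~$dN_i$ in the claimed equation, evaluated at~$t^{-}$. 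Together with the previous paragraph, this shows that~$y$ solves the stated stochastic differential equation.

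The only delicate points --- and the closest the argument comes to a real obstacle --- are the regularity facts underlying the pathwise chain rule: that between jumps~$x$ is genuinely~$C^{1}$, which reduces to continuity of~$f*x$ (itself a consequence of local boundedness of~$x$ together with continuity and compact support of~$f$), and that, almost surely, the jump times are isolated and mutually distinct, so that the decomposition of the time axis into inter-jump intervals and isolated jumps is exhaustive. Both are standard, so apart from these the proof is entirely routine.
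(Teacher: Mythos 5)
The paper does not prove Lemma~\ref{lem:Ito} at all: it is stated ``for the sake of completeness'' with a citation to Brockett's lecture notes, so there is no in-paper argument to compare against. Your pathwise verification is correct, and it is exactly the standard argument underlying the cited result. The key observation --- which you exploit --- is that the paper's notion of solution of \eqref{eq:SIE} is itself entirely pathwise (an ordinary differential equation on the inter-jump intervals plus a prescribed increment at each jump), so the lemma reduces to the classical chain rule between jumps and to applying $\psi$ to the jump relation $x(t^{+})=x(t^{-})+(g_i*x)(t^{-})$ at each jump time. Your two regularity remarks cover the only genuinely delicate points: continuity of $f*x$ (hence $C^1$ regularity of $x$ between jumps, needed for the chain rule with $\psi$ merely differentiable) and the almost sure isolation and non-coincidence of the jump times of the independent counters $N_1,\dotsc,N_m$. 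One cosmetic caveat worth stating explicitly: the equation you obtain for $y$ is not literally of the form \eqref{eq:SIE}, since its drift is $({d\psi}/{dx})(x(t))\,(f*x)(t)$ rather than a convolution against $y$, so ``follows the stochastic differential equation'' must be read in the same pathwise sense (drift between jumps, prescribed increments at jumps); that is how you implicitly read it, and it is also how the lemma is used later in the proof of Proposition~\ref{prop:barx}.
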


\section{Delayed Models for Bet-Hedging Populations}
\label{sec:Bet-Hedging-Populations-under}

The aim of this section is to introduce models of bet-hedging populations
involving delays. We first review the delay-free population model given in
\cite{Kussell2005,Belete2015}. Building on this model, we then introduce two
models of bet-hedging populations involving delays in their state variable and
mode signals, respectively.

Let us consider a biological population growing in an environment fluctuating
among $n$ different environment types. The fluctuation is
modeled~\cite{Kussell2005} by a time-homogeneous Markov process
$\epsilon=\{\epsilon(t)\}_{t\geq0}$ taking values in $\{1,\dotsc,n\}$ and having
the infinitesimal generator $\Pi=[\pi_{ij}]_{i,j}\in\mathbb{R}{}^{n\times n}$.
Therefore, the transition probability of the environment is given by
\begin{equation*}
P(\epsilon(t+h) = j \mid \epsilon(t) = i) = 
\begin{cases}
\pi_{ij} h + o(h),\ \ i\neq j,\\
1+\pi_{ii} h + o(h),\ \ i= j, 
\end{cases}
\end{equation*}
where $o(h)/h\to 0$ as $h\to 0$. Each individual in the population can exhibit
one of $n$ different phenotypes $1$, $\dotsc$, $n$. We assume that the
population having phenotype $k$ grows with the instantaneous rate $g^k_i\geq 0$
under environment $i$. The phenotypes of individuals are assumed to dynamically
change, and the rate of switch from phenotype $k$ to phenotype $\ell$ under
environment $i$ is denoted by $\omega^{k\ell}_i\geq 0$. We define $\omega_i^{kk}
= -\sum_{\ell\neq k}\omega^{k\ell}_i$. Let $x_k(t)$ denote the number of
individuals having phenotype $k$ at time~$t$. Then, the growth of the population
can be modeled~\cite{Kussell2005,Belete2015} by the differential equations
\begin{equation}\label{eq:dx_k/dt:withoutdelay}
\Sigma_0 : \frac{dx_k}{dt}=g^k_{\epsilon(t)}x_k(t)
+\sum_{\ell= 1}^n \omega^{\ell k}_{\epsilon(t)}x_{\ell}(t),\ k=1, \dotsc, n.
\end{equation}
Building on this model, we below introduce two growth models of population
involving delays.

\subsection{Delayed Proliferation}

We first consider a dynamic model of bet-hedging consisting in introducing delays in state variables of the population. Such delays, which can arise from delayed
bet-hedging~\cite{Stumpf2002} or delayed proliferation~\cite{Baker1998}, make
the derivative $dx_k/dt$ depend not only on the size of the current populations, but
also on their past values. To deal with this case, we extend the basic
model~$\Sigma_0$ as follows
\begin{equation*}
\begin{multlined}[.95\linewidth]
\Sigma_1:
\frac{dx_k}{dt}=g^k_{\epsilon(t)} x_k(t)
+
\sum_{\ell=1}^n \omega^{\ell k}_{\epsilon(t)}x_{\ell}(t) 
+
p^k_{\epsilon(t)}x_k(t-d_{\epsilon(t)}^k)
+
\\
\int_{0}^{\infty}q^k_{\epsilon(t)}(\tau)x_k(t-\tau)\,d\tau, 
\  k=1,\dotsc,n.
\end{multlined}
\end{equation*}
It is naturally assumed that $p_k^i$ is a nonnegative number and $q_k^i$ is a
nonnegative function having a finite support in $[0, \infty)$ for all $k, i\in
\{1, \dotsc, n\}$. We specify the initial condition of the system $\Sigma_1$ by
\begin{equation*}
x_k|_{[-T,0]} = \phi_k \in C([-T,0], \mathbb{R}_+),\ k=1,\dotsc, n,
\end{equation*}
where $T$ is the minimum number such that $d_i^k\leq T$ and $[0,T]$ contains the
support of function $q_i^k$ for all $k, i\in \{1, \dotsc, n\}$.

We define the growth rate of the  model $\Sigma_1$ as follows:

\begin{defn}
For $\lambda \in \mathbb{R}$, we say that $\Sigma_1$ is
\emph{$\lambda$-exponentially stable} if there exists $C>0$ such that
$E[\sum_{k=1}^n x_k(t)] \leq C e^{\lambda t} \sum_{k=1}^n \norm{\phi_k}$ for all
$\phi_1$, $\dotsc$, $\phi_n$, and $\epsilon(0)\in \{1, \dotsc, n\}$. We define
the \emph{growth rate} of $\Sigma_1$ as the infimum of $\lambda$ such that
$\Sigma_1$ is $\lambda$-exponentially stable. If the growth rate of $\Sigma_1$
is negative, then we say that $\Sigma_1$ is \emph{exponentially stable}.
\end{defn}

\subsection{Delayed Adaptation}

Another type of delay in bet-hedging populations can be present in the
adaptation of the population to environmental fluctuations. In this case,
each individuals' information~$\sigma$ about the environment (on which their
adaptation is based on) does not necessarily coincide with the
environment~$\epsilon$ due to delays. This implies that, mathematically speaking, there
exists a nonnegative stochastic process~$h = \{h(t)\}_{t\geq 0}$ such that
$\sigma(t) = \epsilon(t - h(t))$. We assume that the growth rates depends on the
environmental variable~$\epsilon$, while the adaptation rates between phenotypes
depend on the delayed information~$\sigma$. In this situation, the basic
population model~$\Sigma_0$ has to be rewritten as
\begin{equation}\label{eq:dx_k/dt:resopsnedelay}
\Sigma_2 : 
\frac{dx_k}{dt}
=
g^k_{\epsilon(t)} x_k(t)
+
\sum_{\ell=1}^n\omega^{\ell k}_{\sigma(t)}x_{\ell}(t),\ 
k=1, \dotsc, n.
\end{equation}
We postpone the detailed description of the delay process~$h$ as well as the
definition of the growth rates to Section~\ref{sec:delayedResponse}.

\section{Growth Rate with Delayed Proliferation}\label{sec:delayedProlife}

The aim of this section is to prove the following theorem, which enables us to
find an upper bound of the growth rate of $\Sigma_1$:

\begin{thm}\label{thm:rate}
Let $\lambda \in \mathbb{R}$ be arbitrary. For all $i, j\in \{1, \dotsc, n \}$
and $t\geq 0$, define $f^{(\lambda)}_{ij} = p_i^je^{-\lambda d_i^j}  ((U_{ji}
e^{\Pi^\top d_i^j})\otimes u_j^\top)$ and $g^{(\lambda)}_{ij}(t) =
q_i^j(t)e^{-\lambda t}  ((U_{ji} e^{\Pi^\top t})\otimes u_j^\top)$. Let
\begin{equation}
\begin{aligned}
\bar A_{ij}^{(\lambda)} &= u_i\otimes f_{ij}^{(\lambda)}, \\
\bar B_{ij}^{(\lambda)}(t) &= u_i\otimes g_{ij}^{(\lambda)}(t). 
\end{aligned}
\end{equation}
Then, the growth rate of $\Sigma_1$ is less than $\lambda$ if the matrix
\begin{equation}
\bar T^{(\lambda)}
=
\bar A_0-\lambda I 
+
\sum_{i,j=1}^n \bar A_{ij}^{(\lambda)}
+
\sum_{i,j=1}^n\int_{0}^{\infty}\bar B_{ij}^{(\lambda)}(t)\,dt
\end{equation}
is Hurwitz stable. 
\end{thm}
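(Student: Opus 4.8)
The plan is to reduce the claim to the delay-system stability criterion of Proposition~\ref{prop:Ngoc} by deriving a deterministic positive delay differential equation for the first moments of $\Sigma_1$. First I would rescale time, putting $y(t)=e^{-\lambda t}x(t)$: dividing the equations of $\Sigma_1$ by $e^{\lambda t}$ shows that $y$ obeys a system with the same structure in which $g_i^k$ is replaced by $g_i^k-\lambda$, $p_i^k$ by $p_i^k e^{-\lambda d_i^k}$, and $q_i^k(\tau)$ by $q_i^k(\tau)e^{-\lambda\tau}$. Since $E[\sum_k x_k(t)]=e^{\lambda t}E[\sum_k y_k(t)]$, it suffices to show that this rescaled system is $\mu$-exponentially stable for some $\mu<0$, as then $\Sigma_1$ is $(\lambda+\mu)$-exponentially stable and its growth rate is at most $\lambda+\mu<\lambda$. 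For $i\in\{1,\dots,n\}$ let $r_i(t)$ be the indicator of $\{\epsilon(t)=i\}$, put $r(t)=(r_1(t),\dots,r_n(t))^\top$, and define $m_i(t)=E[\,y(t)\,r_i(t)\,]\in\mathbb{R}^n$ and $\bar x(t)=\sum_{i=1}^n u_i\otimes m_i(t)\in\mathbb{R}^{n^2}$. Since $\Sigma_1$ is a positive system, $m_i(t)\ge 0$, so $\norm{\bar x(t)}=\sum_{i,k}[m_i(t)]_k=E[\sum_k y_k(t)]$.

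The core step is to compute $\dot{\bar x}$. Combining the It\^o rule of Lemma~\ref{lem:Ito} (applied to a Poisson-counter realization of $\epsilon$) with the relation $\frac{d}{dt}E[r(t)]=\Pi^\top E[r(t)]$, the continuous drift contributes $E[\dot y(t)\,r_i(t)]$ to $\dot m_i(t)$ and the jumps of $\epsilon$ contribute $\sum_j\pi_{ji}m_j(t)$. On the support of $r_i(t)$ one has $\epsilon(t)=i$, so the growth and switching part of $E[\dot y(t)\,r_i(t)]$ is a fixed linear function of $m_i(t)$; together with the jump terms this reproduces $(\bar A_0-\lambda I)\bar x(t)$, where $\bar A_0$ is the Metzler generator of the first-moment dynamics of $\Sigma_0$. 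For the delayed terms, conditioning on the history $\mathcal{F}_{t-d_i^k}$ — using that $y_k(t-d_i^k)$ is $\mathcal{F}_{t-d_i^k}$-measurable, the Markov property $E[r_i(t)\mid\mathcal{F}_{t-d_i^k}]=\sum_\ell r_\ell(t-d_i^k)\,[e^{\Pi d_i^k}]_{\ell i}$, and~\eqref{eq:filtering} — yields
\begin{equation*}
E[\,y_k(t-d_i^k)\,r_i(t)\,]=\sum_{\ell=1}^{n}[e^{\Pi d_i^k}]_{\ell i}\,[m_\ell(t-d_i^k)]_k .
\end{equation*}
Writing $U_{ki}e^{\Pi^\top d}=u_k(e^{\Pi d}u_i)^\top$ and using~\eqref{eq:ABCD}, one checks that $\bar A_{ik}^{(\lambda)}\,\bar x(t-d_i^k)$ equals $u_i$ tensored with the vector whose $k$-th entry is $p_i^k e^{-\lambda d_i^k}\sum_\ell[e^{\Pi d_i^k}]_{\ell i}[m_\ell(t-d_i^k)]_k$ and whose other entries vanish; summing over $i$ and $k$ then produces the point-delay terms of $\dot{\bar x}$, while integrating the same identity against $q_i^k(\tau)e^{-\lambda\tau}\,d\tau$ produces the distributed ones. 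Hence, for $t\ge T$,
\begin{equation*}
\dot{\bar x}(t)=(\bar A_0-\lambda I)\bar x(t)+\sum_{i,k=1}^{n}\bar A_{ik}^{(\lambda)}\,\bar x(t-d_i^k)+\sum_{i,k=1}^{n}\int_0^\infty \bar B_{ik}^{(\lambda)}(\tau)\,\bar x(t-\tau)\,d\tau .
\end{equation*}

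This is a system of the form~\eqref{eq:LTIDelaysys}: $\bar A_0-\lambda I$ is Metzler, each $\bar A_{ik}^{(\lambda)}$ is nonnegative (since $\Pi$ is Metzler, $e^{\Pi d}\ge 0$), and $\tau\mapsto\sum_{i,k}\bar B_{ik}^{(\lambda)}(\tau)$ is nonnegative, continuous, and supported in $[0,T]$. The sum of all the matrices appearing in it is exactly $\bar T^{(\lambda)}$, so if $\bar T^{(\lambda)}$ is Hurwitz stable then Proposition~\ref{prop:Ngoc} gives exponential stability of the delay system for $\bar x$; after restarting it at time $T$ and bounding its behavior on $[0,T]$ by Gr\"onwall's inequality, one obtains $\norm{\bar x(t)}\le C e^{\mu t}\sum_k\norm{\phi_k}$ for some $C>0$ and $\mu<0$. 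Since $\norm{\bar x(t)}=E[\sum_k y_k(t)]$ and $y(t)=e^{-\lambda t}x(t)$, this gives $E[\sum_k x_k(t)]\le C e^{(\lambda+\mu)t}\sum_k\norm{\phi_k}$ with $\lambda+\mu<\lambda$, which proves the theorem.

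The main obstacle is the derivation of the moment delay differential equation — in particular, showing that $E[\,y_k(t-d_i^k)\,r_i(t)\,]$ factors through the environmental transition matrix $e^{\Pi d_i^k}$ and then checking that the resulting coefficients assemble precisely into the Kronecker-structured matrices $\bar A_{ij}^{(\lambda)}$ and $\bar B_{ij}^{(\lambda)}(\tau)$ of the statement. Handling the initial data on $[-T,0]$, where a delayed term can reach back before the environment process starts so that the mode indicator inside it has not undergone a full delay's worth of evolution, is a routine technicality affecting only the finite-time transient.
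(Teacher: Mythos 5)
Your proposal is correct and follows essentially the same route as the paper: you form the conditional first moments $E[\eta_i(t)x(t)]$ (the paper's $\zeta=E[\eta\otimes x]$), use the Markov property and the tower rule to express $E[\eta_i(t)x(t-h)]$ through $e^{\Pi h}$, assemble the resulting delay differential equation with the Kronecker-structured coefficients, and invoke Proposition~\ref{prop:Ngoc}; the only (cosmetic) difference is that you apply the exponential rescaling $e^{-\lambda t}$ before deriving the moment equation rather than after. Your choice to condition on the full history $\mathcal{F}_{t-h}$ instead of on $\eta(t-h)$ alone is in fact the slightly more careful way to justify pulling $x(t-h)$ out of the conditional expectation.
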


\begin{rem}
We can use Theorem~\ref{thm:rate} and a bisection search to find the suboptimal
upper bound on the growth rates. We also remark that, when $\Sigma_1$ has no
delay, the sufficient condition in Theorem~\ref{thm:rate} is also necessary by
\cite[Theorem~5.1]{Ogura2013f}.
\end{rem}

The rest of this section is devoted to the proof of Theorem~\ref{thm:rate}. We
first introduce a vectorial representation of $\Sigma_1$. For each $i \in \{1,
\dotsc, n\}$, define $G_i = \diag(g^1_i, \dotsc, g^n_i)$, $\Omega_i =
[\omega^{k\ell}_i]_{k,\ell}$, $A_i = G_i + \Omega_i^\top$, and $Q_i =
\diag(q^1_i, \dotsc, q^n_i)$. Let
\begin{equation}
x = \begin{bmatrix}
x_1\\\vdots\\x_n
\end{bmatrix}, 
\ 
(\mathcal P_ix)(t) = \begin{bmatrix}
p_i^1 x_1(t-d_i^1)\\\vdots\\ p_i^n x_n(t-d_i^n)
\end{bmatrix},
\ 
\phi = \begin{bmatrix}
\phi_1\\\vdots\\\phi_n
\end{bmatrix}.  
\end{equation}
We can then write $\Sigma_1$ in a vector form
as
\begin{equation}\label{eq:dx/dt}
\Sigma_1 : \frac{dx}{dt} = A_{\epsilon(t)} x(t)
+(\mathcal P_{\epsilon(t)}x)(t) + (Q_{\epsilon(t)}*x)(t)
\end{equation}
with initial condition $x|_{[-T,0]} = \phi$. Let us also introduce the vectorial
representation $\eta =\{\eta(t)\}_{t \geq 0}$ for the environmental variable
$\epsilon$ by $\eta_i(t) = 1$ if $\epsilon(t) = i$ and $\eta_i(t) = 0$
otherwise. Notice that $\eta(t)=u_{\epsilon(t)}$.

In what follows, instead of directly dealing with the process~$x(t)$, we shall
study the auxiliary processes given by \cite{Ogura2013f}
\[
z(t)=\eta(t)\otimes x(t),\ \zeta(t)=E[z(t)],\ t\geq 0. 
\]
Notice that neither $z(t)$ nor $\zeta(t)$ is defined when $t<0$ because
$\eta(t)$ is defined only for $t\geq 0$. The next lemma shows that these
auxiliary processes preserve the norm of $x(t)$: 

\begin{lem}\label{lem:presevrve}
$\norm{\zeta(t)} = E[\norm{x(t)}]$ for every $t\geq 0$.
\end{lem}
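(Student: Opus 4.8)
The plan is to reduce the identity to two elementary facts: the trajectories of $\Sigma_1$ remain nonnegative, and the $1$-norm is multiplicative under Kronecker products of nonnegative vectors. First I would check positivity. For each frozen sample path of $\epsilon$, equation \eqref{eq:dx/dt} is a positive linear system with delays of the type \eqref{eq:LTIDelaysys}: each $A_i = G_i + \Omega_i^\top$ is Metzler because $G_i$ is diagonal with nonnegative diagonal and $\Omega_i^\top$ has nonnegative off-diagonal entries ($\omega_i^{k\ell}\ge 0$ for $k\ne\ell$), while $\mathcal P_{\epsilon(t)}$ and $Q_{\epsilon(t)}$ act through the nonnegative constants $p_i^k$ and nonnegative functions $q_i^k$. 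Hence, from the nonnegative initial data $\phi\in C([-T,0],\mathbb{R}^n_+)$, the solution obeys $x(t)\in\mathbb{R}^n_+$ for all $t\ge 0$ (almost surely). In particular $z(t)=\eta(t)\otimes x(t)$ is a nonnegative random vector for each $t\ge 0$, and therefore so is $\zeta(t)=E[z(t)]$; integrability of $x(t)$ on compact time intervals, which is needed for $\zeta(t)$ to be well defined, follows pathwise from the exponential growth bound for \eqref{eq:LTIDelaysys} together with the boundedness of $\Pi$, and I do not anticipate any difficulty there.

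Next I would establish the pathwise identity $\norm{z(t)}=\norm{x(t)}$. Since $\eta(t)=u_{\epsilon(t)}$ is a canonical basis vector, it is nonnegative with $\norm{\eta(t)}=1$. For any nonnegative vectors $a$ and $b$ one has $\norm{a\otimes b}=\norm{a}\,\norm{b}$: writing $1$-norms as sums of (nonnegative) entries, $\norm{a\otimes b}=\onev^\top(a\otimes b)$, and by \eqref{eq:ABCD} with the all-ones rows of appropriate lengths, $\onev^\top(a\otimes b)=(\onev^\top\otimes\onev^\top)(a\otimes b)=(\onev^\top a)\otimes(\onev^\top b)=(\onev^\top a)(\onev^\top b)=\norm{a}\,\norm{b}$, where the last two steps use that $(\onev^\top a)$ and $(\onev^\top b)$ are scalars and that $a,b\ge 0$. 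Applying this with $a=\eta(t)$ and $b=x(t)$ yields $\norm{z(t)}=\norm{\eta(t)}\,\norm{x(t)}=\norm{x(t)}$ for every $t\ge 0$.

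Finally I would take expectations. Because $z(t)\ge 0$ and $\zeta(t)\ge 0$ entrywise, $\norm{z(t)}=\onev^\top z(t)$ and $\norm{\zeta(t)}=\onev^\top\zeta(t)$, so by linearity of expectation $\norm{\zeta(t)}=\onev^\top E[z(t)]=E[\onev^\top z(t)]=E[\norm{z(t)}]=E[\norm{x(t)}]$, which is the assertion. The only step that is not pure bookkeeping is the positivity of the delay system \eqref{eq:dx/dt}, and even that is standard for positive systems with delays and is consistent with the framework underlying Proposition~\ref{prop:Ngoc}; so I expect the real role of this lemma to be simply that it reduces the stochastic delay dynamics of $\Sigma_1$ to the deterministic vector-valued quantity $\zeta(t)$.
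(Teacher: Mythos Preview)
Your argument is correct and follows essentially the same route as the paper: both proofs rest on the observations that $x(t)\ge 0$, $\zeta(t)\ge 0$, that $\onev^\top$ computes the $1$-norm on nonnegative vectors, and that $\onev_{n^2}^\top(\eta\otimes x)=(\onev_n^\top\eta)(\onev_n^\top x)=\onev_n^\top x$. The only difference is that you spell out the positivity of the delayed trajectory and the pathwise identity $\norm{z(t)}=\norm{x(t)}$ explicitly, whereas the paper compresses this into a one-line computation.
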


\begin{proof}
Notice that, if $x\in \mathbb{R}^n$ is nonnegative, then $\norm x = \onev_n^\top
x$. Therefore, since $\zeta(t)\geq 0$ and $x(t)\geq 0$, we have $\norm{\zeta(t)}
= \onev_{n^2}^\top E[\zeta(t)] = E[(\onev_n\eta(t)) (\onev_n x(t))] = E[\onev_n
x(t)] = E[\norm{x(t)}]$.
\end{proof}

The following lemma plays an important role in the proof of the main result:

\begin{lem}\label{lem:key}
For all $i\in\{1, \dotsc, n\}$, $h \in [0, T]$, and $t\geq T$, we have
$E[\eta_i(t)x(t-h)]=((u_i^{\top}e^{\Pi^{\top}h})\otimes I_{n}) \zeta(t-h)$.
\end{lem}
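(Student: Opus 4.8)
The plan is to condition on the history of the environment process up to time $t-h$, and then exploit the Markov property of $\epsilon$ together with the fact that $x(t-h)$ is a deterministic functional of that history. Since $t\ge T$ and $h\in[0,T]$, we have $t-h\ge 0$, so $\eta(t-h)$, $z(t-h)=\eta(t-h)\otimes x(t-h)$, and $\zeta(t-h)$ are all well defined. Let $\mathcal{M}_{t-h}$ denote the $\sigma$-algebra generated by $\{\epsilon(s):0\le s\le t-h\}$. The first step is to observe that, conditioned on a sample path of $\epsilon$, equation~\eqref{eq:dx/dt} is an ordinary delay differential equation whose solution on $[0,t-h]$ is determined by the fixed initial data $\phi$ and by $\{\epsilon(s):0\le s\le t-h\}$ alone; hence $x(t-h)$, and therefore $z(t-h)$, is $\mathcal{M}_{t-h}$-measurable.

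Next I would apply the tower property \eqref{eq:filtering} with $\mathcal{M}_1=\mathcal{M}_{t-h}$ to write $E[\eta_i(t)x(t-h)]=E\!\left[\,x(t-h)\,E[\eta_i(t)\mid\mathcal{M}_{t-h}]\,\right]$, pulling $x(t-h)$ out of the inner conditional expectation by measurability. The Markov property of $\epsilon$ reduces the inner term to $E[\eta_i(t)\mid\epsilon(t-h)]=P(\epsilon(t)=i\mid\epsilon(t-h))$, and, from the definition of the infinitesimal generator $\Pi$, the transition matrix over a time interval of length $h$ is $e^{\Pi h}$, so that $P(\epsilon(t)=i\mid\epsilon(t-h)=j)=(e^{\Pi h})_{ji}=(e^{\Pi^\top h})_{ij}$. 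Since exactly one $\eta_j(t-h)$ equals $1$, this conditional probability equals $\sum_{j=1}^n \eta_j(t-h)(e^{\Pi^\top h})_{ij}=u_i^\top e^{\Pi^\top h}\eta(t-h)$, a scalar, and we obtain $E[\eta_i(t)x(t-h)\mid\mathcal{M}_{t-h}]=\big(u_i^\top e^{\Pi^\top h}\eta(t-h)\big)\,x(t-h)$.

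The final step is to recognize this product as $\big((u_i^\top e^{\Pi^\top h})\otimes I_{n}\big)\big(\eta(t-h)\otimes x(t-h)\big)$, which is the mixed-product rule \eqref{eq:ABCD} applied with $A=u_i^\top e^{\Pi^\top h}$, $B=I_{n}$, $C=\eta(t-h)$, $D=x(t-h)$ (both $AC$ and $BD$ being well-defined). Thus $E[\eta_i(t)x(t-h)\mid\mathcal{M}_{t-h}]=\big((u_i^\top e^{\Pi^\top h})\otimes I_{n}\big)z(t-h)$, and taking expectations once more via \eqref{eq:filtering}, together with $\zeta=E[z]$, gives the claimed identity $E[\eta_i(t)x(t-h)]=\big((u_i^\top e^{\Pi^\top h})\otimes I_{n}\big)\zeta(t-h)$.

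I expect the only genuinely delicate point to be the measurability claim in the first step: one must verify that the solution of the delay system $\Sigma_1$ at time $t-h$ is indeed a functional of the environment path restricted to $[0,t-h]$ (and of the fixed $\phi$), which is precisely where the hypotheses $t\ge T$ and $h\le T$ enter, since they guarantee $t-h\ge 0$ so that $\zeta(t-h)$ is even defined. The remainder is bookkeeping: tracking the transpose in $e^{\Pi^\top h}$ versus $e^{\Pi h}$, and a clean application of the Kronecker product identity.
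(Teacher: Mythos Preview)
Your proof is correct and follows essentially the same route as the paper: apply the tower property~\eqref{eq:filtering}, pull $x(t-h)$ outside the inner conditional expectation, evaluate $E[\eta_i(t)\mid\text{past}]=u_i^\top e^{\Pi^\top h}\eta(t-h)$ via the transition semigroup, and finish with the mixed-product rule~\eqref{eq:ABCD}. The only difference is that the paper conditions on $\eta(t-h)$ alone rather than on the full history $\mathcal{M}_{t-h}$; your choice of the larger $\sigma$-algebra makes the measurability of $x(t-h)$ (and hence the ``pulling out'' step) explicit, whereas the paper's version leaves that justification implicit.
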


\begin{proof}
Equation \eqref{eq:filtering} shows that
\begin{equation}
\begin{aligned} 
E[\eta_i(t)x(t-h)]
&=  E\bigl[E[\eta_i(t)x(t-h)\mid\eta(t-h)]\bigr]\\
&=  E\bigl[E[\eta_i(t)\mid\eta(t-h)]\,x(t-h)\bigr].
\end{aligned} \label{eq:Iusedfilterign}
\end{equation}
Since $\eta_i=u_i^{\top}\eta$, we can show $E[\eta_i(t)\mid\eta(t-h)]
=E[u_i^{\top}\eta(t)\mid\eta(t-h)] =u_i^{\top}e^{\Pi^{\top}h}\eta(t-h)$. This
equation and \eqref{eq:Iusedfilterign} completes the proof.
\end{proof}

The next corollary easily follows from Lemma~\ref{lem:key}.

\begin{cor}\label{cor:f_}
Let $i, j\in \{ 1, \dotsc, n\}$ and $t\geq T$ be arbitrary. Define $f_{ij} =
p_i^j  ((U_{ji} e^{\Pi^\top d_i^j})\otimes u_j^\top)$ and $g_{ij}(t) = q_i^j(t)
((U_{ji} e^{\Pi^\top t})\otimes u_j^\top)$. Then, for all $i$ and $t$, we have
\begin{align}
E[\eta_i(t)(\mathcal P_ix)(t)] 
&=
\sum_{j=1}^n f_{ij}\zeta(t-d_i^j),  \label{eq:E[delta_i(t)(P...}
\\
E[\eta_i(t)(Q_i * x)(t)] 
&=
\sum_{j=1}^n (g_{ij}*\zeta )(t).
\end{align}
\afterequation
\end{cor}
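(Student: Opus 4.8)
The plan is to derive both identities directly from Lemma~\ref{lem:key}, after rewriting the operators $\mathcal P_i$ and $Q_i*(\cdot)$ componentwise and performing a short Kronecker-product manipulation based on the mixed-product rule~\eqref{eq:ABCD}. Throughout, the hypothesis $t \geq T$ guarantees that every shifted argument $t-d_i^j$ and $t-\tau$ (with $\tau$ in the support of $Q_i$) is nonnegative, so that the processes $z,\zeta$ are defined there and Lemma~\ref{lem:key} is applicable.

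First I would note that the $k$-th entry of $(\mathcal P_i x)(t)$ equals $p_i^k x_k(t-d_i^k) = p_i^k u_k^\top x(t-d_i^k)$, hence $(\mathcal P_i x)(t) = \sum_{j=1}^n p_i^j u_j u_j^\top x(t-d_i^j)$. Multiplying by $\eta_i(t)$, taking expectations, and using linearity,
\[
E[\eta_i(t)(\mathcal P_i x)(t)] = \sum_{j=1}^n p_i^j\, u_j u_j^\top\, E[\eta_i(t) x(t-d_i^j)].
\]
Since $d_i^j \in [0,T]$ and $t \geq T$, Lemma~\ref{lem:key} with $h = d_i^j$ gives $E[\eta_i(t) x(t-d_i^j)] = \bigl((u_i^\top e^{\Pi^\top d_i^j}) \otimes I_n\bigr)\zeta(t-d_i^j)$. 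It then remains to identify the coefficient matrix: writing $u_j u_j^\top = u_j \otimes u_j^\top$ and applying~\eqref{eq:ABCD} with the compatible blocks $u_j$ ($n\times 1$), $u_j^\top$ ($1\times n$), $u_i^\top e^{\Pi^\top d_i^j}$ ($1\times n$), and $I_n$,
\[
u_j u_j^\top\bigl((u_i^\top e^{\Pi^\top d_i^j})\otimes I_n\bigr) = (u_j u_i^\top e^{\Pi^\top d_i^j})\otimes u_j^\top = (U_{ji}e^{\Pi^\top d_i^j})\otimes u_j^\top,
\]
so that $p_i^j u_j u_j^\top\bigl((u_i^\top e^{\Pi^\top d_i^j})\otimes I_n\bigr) = f_{ij}$, which is precisely~\eqref{eq:E[delta_i(t)(P...}.

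For the convolution term I would proceed identically: the $k$-th entry of $(Q_i*x)(t)$ is $\int_0^\infty q_i^k(\tau) x_k(t-\tau)\,d\tau$, hence $(Q_i*x)(t) = \sum_{j=1}^n\int_0^\infty q_i^j(\tau) u_j u_j^\top x(t-\tau)\,d\tau$. Multiplying by $\eta_i(t)$, taking expectations, and exchanging the expectation with the finite-support integral, the integrand becomes $q_i^j(\tau) u_j u_j^\top E[\eta_i(t) x(t-\tau)]$; since $Q_i$ is supported in $[0,T]$ and $t\geq T$, Lemma~\ref{lem:key} applies for each $\tau$, and the same Kronecker computation converts $q_i^j(\tau) u_j u_j^\top\bigl((u_i^\top e^{\Pi^\top \tau})\otimes I_n\bigr)$ into $g_{ij}(\tau)$. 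Recognizing $\sum_j\int_0^\infty g_{ij}(\tau)\zeta(t-\tau)\,d\tau = \sum_j (g_{ij}*\zeta)(t)$ finishes the argument.

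I do not expect a genuine obstacle here; the statement is essentially bookkeeping on top of Lemma~\ref{lem:key}. The only points requiring a little care are the justification of interchanging the expectation with the finite sum (trivial, by linearity) and with the integral (immediate from Fubini's theorem, using that $Q_i$ has compact support and that $\norm{\zeta}$ is locally bounded, e.g.\ via Lemma~\ref{lem:presevrve} and the exponential growth of positive delay systems), and keeping the block sizes consistent so that the mixed-product identity~\eqref{eq:ABCD} is legitimately applicable.
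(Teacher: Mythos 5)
Your proof is correct and follows essentially the same route as the paper's: write $(\mathcal P_i x)(t)=\sum_{j=1}^n p_i^j u_j u_j^\top x(t-d_i^j)$, apply Lemma~\ref{lem:key}, and convert the coefficient via $u_ju_j^\top=u_j\otimes u_j^\top$ together with the mixed-product rule~\eqref{eq:ABCD}. The paper dismisses the convolution identity as ``the same way,'' which you spell out explicitly (including the Fubini interchange); that is just added detail, not a different argument.
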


\begin{proof}
From the definition of the operator $\mathcal P_i$, we can show that
$\eta_i(t)(\mathcal P_i x)(t) = \sum_{j=1}^n u_j \eta_i(t)p_i^j x_j(t-d_i^j) =
\sum_{j=1}^n p_i^j u_j u_j^\top \eta_i(t) x(t-d_i^j)$. Taking the expectations
in the both hand sides of this equation, from Lemma~\ref{lem:key} we obtain
\begin{equation}
\begin{aligned}
E[\eta_i(t)(\mathcal P_i x)(t)]
&=
\sum_{j=1}^n p_i^j u_j u_j^\top \Bigl((u_i^\top e^{\Pi^\top d_i^j})\otimes I_n\Bigr)\zeta(t-d_i^j)
\\
&=
\sum_{j=1}^n p_i^j  \Bigl((u_ju_i^\top e^{\Pi^\top d_i^j})\otimes u_j^\top\Bigr)\zeta(t-d_i^j), 
\end{aligned}
\end{equation}
where we used $u_ju_j^\top = u_j\otimes u_j^\top$ and \eqref{eq:ABCD} to derive
the last equation. This equation proves \eqref{eq:E[delta_i(t)(P...}. We can
prove the other equation in the same way and hence omit its proof.
\end{proof}

Using Corollary~\ref{cor:f_}, we can then derive the dynamics of the variable
$\zeta$ as follows:

\begin{prop} \label{prop:barx}
 Define 
$\bar A_0 = \Pi^{\top}\otimes I_{n} + \bigoplus_{i=1}^n A_i$, 
$\bar A_{ij} = u_i\otimes f_{ij}$,
and $\bar B_{ij}(t) = u_i\otimes g_{ij}(t)$, 
for all $i, j\in \{1, \dotsc, n\}$ and $t\geq T$. Then, for every $t\geq T$, we have 
\begin{equation}
\frac{d\zeta}{dt}=\bar A_0\zeta(t)
+
\sum_{i,j=1}^n \bar A_{ij}\zeta(t-d_i^j)
+
\sum_{i,j=1}^n(\bar B_{ij}*\zeta)(t).\label{eq:barMJLSD}
\end{equation}
\afterequation
\end{prop}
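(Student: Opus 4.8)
The plan is to derive a delay stochastic differential equation for the auxiliary process $z(t)=\eta(t)\otimes x(t)$ and then take expectations termwise. Because $\epsilon$ is a time-homogeneous Markov chain with generator $\Pi$, its indicator vector obeys $d\eta=\sum_{i\neq j}(u_j-u_i)\,\eta_i(t^-)\,dN_{ij}$, where for each ordered pair $i\neq j$ the Poisson counter $N_{ij}$ has intensity $\pi_{ij}$ while $\epsilon=i$ and triggers the transition $i\to j$; crucially, the state $x$ of $\Sigma_1$ solves a (delay) differential equation and so stays continuous across environment jumps. Applying the product rule for jump processes to $z=\eta\otimes x$ — in the spirit of the It\^o rule of Lemma~\ref{lem:Ito} with $\psi$ the bilinear Kronecker map — the absolutely continuous part contributes $\eta(t)\otimes\bigl(A_{\epsilon(t)}x(t)+(\mathcal P_{\epsilon(t)}x)(t)+(Q_{\epsilon(t)}*x)(t)\bigr)\,dt$ by \eqref{eq:dx/dt}, while the firing of $N_{ij}$ contributes $\bigl((u_j-u_i)\otimes x(t^-)\bigr)\eta_i(t^-)\,dN_{ij}$. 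Decomposing $\eta(t)\otimes(A_{\epsilon(t)}v)=\sum_i\eta_i(t)\,(u_i\otimes A_iv)$, and similarly for the other two drift terms, yields an explicit, if lengthy, stochastic differential equation for $z$.

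Taking expectations, I would treat the three kinds of terms separately. For the drift term involving $A_i$, the $h=0$ instance of Lemma~\ref{lem:key} gives $E[\eta_i(t)x(t)]=(u_i^\top\otimes I_n)\zeta(t)$, whence $E[\eta_i(t)(u_i\otimes A_ix(t))]=\bigl((u_iu_i^\top)\otimes A_i\bigr)\zeta(t)$ by \eqref{eq:ABCD}, and summing over $i$ produces $\bigl(\bigoplus_{i=1}^nA_i\bigr)\zeta(t)$. The compensator of the jump part is $\sum_{i\neq j}\pi_{ij}\,\bigl((u_j-u_i)\otimes E[\eta_i(t)x(t)]\bigr)$, which by the same reasoning equals $M\otimes I_n$ applied to $\zeta(t)$ with $M=\sum_{i\neq j}\pi_{ij}(u_j-u_i)u_i^\top$; the identity $\pi_{ii}=-\sum_{j\neq i}\pi_{ij}$ collapses $M$ to $\sum_{i,j}\pi_{ij}u_ju_i^\top=\Pi^\top$, so this term and the previous one together are exactly $\bar A_0\zeta(t)$. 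For the point-delay and distributed-delay drift terms I invoke Corollary~\ref{cor:f_} directly: $E[\eta_i(t)(u_i\otimes(\mathcal P_ix)(t))]=u_i\otimes\sum_jf_{ij}\zeta(t-d_i^j)=\sum_j\bar A_{ij}\zeta(t-d_i^j)$ and, pulling the expectation through the convolution, $E[\eta_i(t)(u_i\otimes(Q_i*x)(t))]=u_i\otimes\sum_j(g_{ij}*\zeta)(t)=\sum_j(\bar B_{ij}*\zeta)(t)$, using $u_i\otimes(g_{ij}*\zeta)=\bar B_{ij}*\zeta$ again by \eqref{eq:ABCD}. The restriction $t\geq T$ is exactly what makes Lemma~\ref{lem:key} and Corollary~\ref{cor:f_} applicable, since then every delayed argument $t-d_i^j$ and $t-\tau$ lies in the range where $\zeta$ is defined. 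Summing $i$ and $j$ over $\{1,\dotsc,n\}$ and differentiating then assembles \eqref{eq:barMJLSD}.

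The step I expect to be the main obstacle is the careful treatment of the environmental jump term: one must ensure that the driving counters are state-dependent — equivalently, that the increments carry the factor $\eta_i$ — so that the compensator produces $\Pi^\top\otimes I_n$ and not some unrelated matrix, and one must verify that $t\mapsto\zeta(t)$ is genuinely differentiable on $[T,\infty)$, which follows from the continuity of $x$ and of the data together with dominated convergence, so that writing $d\zeta/dt$ is legitimate rather than merely an integral identity. A secondary technical point is the interchange of $E[\cdot]$ with the convolution against $Q_i$, justified by Tonelli's theorem using the nonnegativity of $Q_i$ and $x$ and the compact support of $Q_i$; this has in effect already been carried out inside Corollary~\ref{cor:f_}. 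The remaining manipulations — associativity of $\otimes$, the identity $u_iu_i^\top=u_i\otimes u_i^\top$, and the mixed-product rule \eqref{eq:ABCD} — are routine.
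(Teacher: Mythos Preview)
Your proposal is correct and follows essentially the same approach as the paper: derive an SDE for $z=\eta\otimes x$ via the It\^o rule for Poisson-driven systems, take expectations, and split the drift into the $A_i$, $\mathcal P_i$, and $Q_i$ pieces using Corollary~\ref{cor:f_}. The only notable difference is that the paper packages the It\^o step through the extended variable $y=[x;\eta]$ and $\psi(y)=\eta\otimes x$, and outsources the identification of the jump compensator with $(\Pi^\top\otimes I_n)\zeta$ to \cite[Proposition~5.3]{Ogura2013f}, whereas you compute this explicitly by collapsing $\sum_{i\neq j}\pi_{ij}(u_j-u_i)u_i^\top$ to $\Pi^\top$; your version is thus more self-contained but otherwise identical in substance.
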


\begin{proof}
We first derive a
differential equation for the extended state variable
\[
y=\begin{bmatrix}x\\
\eta
\end{bmatrix}.
\]
For each $i \in \{1, \dotsc, n \}$, define the operator $\mathcal A_i$ by $
(\mathcal A_i x)(t) = A_i x(t) + (\mathcal P_ix)(t) + (Q_i*x)(t)$. Then,
$\Sigma_1$ admits the representation $dx/dt = \mathcal A_{\epsilon(t)}x$.
Therefore, from the definition of the variables $\eta_i$, we can write $\Sigma_1$
as
\begin{equation}
\Sigma_1 : \frac{dx}{dt}=\sum_{i=1}^n\eta_i(\mathcal{A}_ix)(t). \label{eq:MJLSD.with.delta}
\end{equation}
Also, we know that $\eta$ follows the stochastic differential equation
\cite{Brockett2009} $d\eta=\sum_{i=1}^n \sum_{j\neq i}
(U_{ji}-U_{ii})\eta\,dN_{ij}$, where $N_{ij}$ denotes the Poisson counter of
rate $\pi_{ij}$ for each distinct pair~$(i,j) \in \{1, \dotsc, N \}^2$. This
equation and \eqref{eq:MJLSD.with.delta} show that
\[
dy=\begin{bmatrix}\sum_{i=1}^n\eta_i(\mathcal{A}_ix)\\
0
\end{bmatrix}dt+\sum_{i=1}^n \sum_{j\neq i}\begin{bmatrix}0\\
(U_{ji}-U_{ii})\eta
\end{bmatrix}dN_{ij}.
\]

Now, applying Lemma~\ref{lem:Ito} to the function $\psi(y) = \eta\otimes x = z$,
we obtain
\begin{equation*}
\begin{aligned}
dz
&=
\begin{multlined}[t][.85\linewidth]
\frac{\partial z}{\partial y}\begin{bmatrix}\sum_{i=1}^n\eta_i(\mathcal{A}_ix)\\
0
\end{bmatrix}\,dt+
\\
\sum_{i=1}^n \sum_{j\neq i}\left[\psi\left(y+\begin{bmatrix}0\\
(U_{ji}-U_{ii})\eta
\end{bmatrix}\right)-\psi(y)\right]dN_{ij}
\end{multlined}
\\
&=
\begin{multlined}[t][.85\linewidth]
\sum_{i=1}^n(\eta\otimes I_{n})\eta_i(\mathcal{A}_ix)\,dt 
+ 
\\
\sum_{i=1}^n \sum_{j\neq i}\left[\bigl((U_{ji}-U_{ii})\eta\bigr)\otimes x \right]\,dN_{ij},
\end{multlined}
\end{aligned}
\label{eq:d(del.ox.x).pre}
\end{equation*}
where we used the identity ${\partial \psi}/{\partial y}=[\eta\otimes I_{n} \
I_{n}\otimes x]$ in the last equation. Therefore, the expectation $\zeta$ obeys the differential equation
\begin{equation}\label{eq:dbarx/dt}
\begin{multlined}[.8\linewidth]
\frac{d\zeta}{dt} = 
\sum_{i=1}^nE[(\eta\otimes I_{n})\eta_i(\mathcal{A}_ix)] + \\
\sum_{i=1}^n \sum_{j\neq i}E\left[\bigl((U_{ji}-U_{ii})\eta\bigr)\otimes x \right]\pi_{ij}.
\end{multlined}
\end{equation}
Let us compute the expectations in the right hand side of this equation. Since
$\eta_i\eta_{j}=0$ for $i\neq j$ and $\eta_i^{2}=\eta_i$, we have
$\eta\eta_i=\eta_iu_i$. Therefore, it follows that $(\eta\otimes
I_{n})\eta_i(\mathcal{A}_ix) = (u_i\otimes A_i)\eta_ix+(u_i\otimes
I_n)\eta_i(\mathcal P_ix)+(u_i\otimes I_{n})\eta_i(Q_i*x)$. Hence, we can
compute the first term in the right hand side of \eqref{eq:dbarx/dt} as
\begin{equation*}
\begin{aligned}
&\sum_{i=1}^nE[(\eta\otimes I_{n})\eta_i(\mathcal{A}_ix)]
\\
=
&
\begin{multlined}[t][.85\linewidth]
\biggl(\bigoplus_{i=1}^n A_i\biggr) \zeta
+
\sum_{i=1}^n(u_i\otimes I)E[\eta_i(\mathcal P_ix)]
+\\
\sum_{i=1}^n(u_i\otimes I_{n})E[\eta_i(Q_i*x)]
\end{multlined}
\\
=
&
\biggl(\bigoplus_{i=1}^n A_i\biggr) \zeta
+
\sum_{i,j=1}^n \bar A_{ij}\zeta(t-d_i^j)
+
\sum_{i,j=1}^n(\bar B_{ij}*\zeta)(t), 
\end{aligned}
\end{equation*}
where we used Corollary~\ref{cor:f_} for deriving the last identity. On the
other hand, it is shown in the proof of \cite[Proposition~5.3]{Ogura2013f} that the second term of the right
hand side of \eqref{eq:dbarx/dt} equals $(\Pi^{\top}\otimes I_{n})\zeta$. This
completes the proof.
\end{proof}

We are now ready to prove Theorem~\ref{thm:rate}: 

\begin{proofof}{Theorem~\ref{thm:rate}}
Let $\phi$ and $\epsilon(0)$ be arbitrary. We first consider the special case of
$\lambda=0$. Assume that the matrix $\bar{T}^{(0)}$ is Hurwitz stable. Then, by
Proposition~\ref{prop:Ngoc}, the delayed positive linear
system~\eqref{eq:barMJLSD} is exponentially stable. Notice that the
equation~\eqref{eq:barMJLSD} is defined only for $t\geq T$. By the stability of
the system \eqref{eq:barMJLSD}, there exist $C_1 > 0$ and $\rho>0$ such that
\begin{equation}\label{eq:proof}
\norm{\zeta(t)} \leq C_1e^{-\rho(t-T)}\norm{\zeta\vert_{[0,T]}}. 
\end{equation}
On the other hand, due to the linearity of the system~$\Sigma_1$, there exists
$C_2>0$ such that $\norm{x\vert_{[0,T]}} \leq C_2 \norm \phi$. Using this
inequality, \eqref{eq:proof}, and Lemma~\ref{lem:presevrve}, we can show that
$E[\norm{x(t)}] \leq C_1 C_2 e^{-\rho(t-T)}\norm{\phi}$. This shows the exponential
stability of $\Sigma_1$.

For the general case, observe that the variable $\tilde x(t) = e^{-\lambda
t}x(t)$ satisfies the stochastic differential equation
\begin{equation}\label{eq:dx/dt^lambda}
\tilde \Sigma_1 : \frac{d\tilde x}{dt} = (A_{\epsilon(t)}-\lambda I) \tilde x(t)
+(\mathcal P^{(\lambda)}_{\epsilon(t)}\tilde x)(t) + (Q^{(\lambda)}_{\epsilon(t)}*\tilde x)(t), 
\end{equation}
where 
\begin{equation}
\begin{aligned}
(\mathcal P^{(\lambda)}_i\tilde x)(t) = \begin{bmatrix}
p_i^1 e^{-\lambda d_i^1} \tilde x(t-d_i^1)
\\
\vdots
\\
p_i^n e^{-\lambda d_i^n} \tilde x(t-d_i^n)
\end{bmatrix},\ 
Q_i^{(\lambda)}(t) = e^{-\lambda t}Q_i(t),
\end{aligned}
\end{equation}
for all $i\in \{1, \dotsc, n \}$ and $t\geq 0$. Applying the above argument on
exponential stability to $\tilde \Sigma_1$, we can show that $\tilde \Sigma_1$
is exponentially stable if $\bar T^{(\lambda)}$ is Hurwitz stable. This
completes the proof of the theorem because $\tilde \Sigma_1$ is exponentially
stable if and only if the growth rate of $\Sigma_1$ is less than $\lambda$.
\end{proofof}

\section{Growth Rate with Delayed Adaptation} \label{sec:delayedResponse}

In this section, we study the population model $\Sigma_2$ given in
\eqref{eq:dx_k/dt:resopsnedelay} for the case of delayed adaptation. We show
that we can characterize the growth rate of the populations as the maximum real
eigenvalue of a Metzler matrix, under the assumption that the delays are
described by a class of distributions called Coxian distributions. We focus on
the case $n=2$ for simplicity of presentations.

We consider the situation where the population as a whole updates its knowledge
$\sigma$ about the environment in the following stochastic manner:

\begin{enumerate}
\item When the environment changes from $i$ to $j$ at a time $t_0$ such that
$\sigma(t_0) = i$, a random number $T_{ij}$ is independently drawn from a
distribution $X_{ij}$. 

\item If the environment $\epsilon$ remains to be $j$ until the time $t_0+T_{ij}$,
then $\sigma(t)$ becomes $j$ at time $t_0+T_{ij}$.

\item If the value of $\epsilon$ changes before the time $t_0+T_{ij}$, then we
discard the number $T_{ij}$ and go back to the first step.
\end{enumerate}
In other words, if we let $t_1>t_0$ denote the next (minimum) time at which
$\epsilon$ changes, then we have
\begin{equation}
\sigma(t) = \begin{cases}
\sigma(t_0), &\text{$t_0\leq t\leq \min(t_0+T_{ij}, t_1)$,}
\\
\epsilon(t_0^+), &\text{$\min(t_0+T_{ij}, t_1) \leq t<t_1$.}
\end{cases}
\end{equation}
We call the distributions $X_{ij}$, or, the random times $T_{ij}$ as the
response delays.

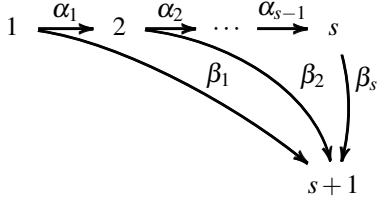
\begin{figure}[tb]
\newcommand{\myrule}{\rule[-.27cm]{0cm}{.7cm}}%
\centering
\centerline{\xymatrix@!C@=1.7em{
*+={\myrule 1} 
\ar[r]^-{\text{\normalsize$\alpha_1$}}
\ar@<.05ex>[r]
\ar@<-.05ex>[r]
\ar@<.075ex>[r]
\ar@<-.075ex>[r]
 \ar@(r,lu)[ddrrr]^-(.6){\text{\normalsize$\beta_1$}}
\ar@<.05ex>@(r,lu)[ddrrr]
\ar@<-.05ex>@(r,lu)[ddrrr]
\ar@<.075ex>@(r,lu)[ddrrr]
\ar@<-.075ex>@(r,lu)[ddrrr]
& 
*+={\myrule 2} 
\ar[r]^-{\text{\normalsize$\alpha_2$}} 
\ar@<.05ex>[r]
\ar@<-.05ex>[r]
\ar@<.075ex>[r]
\ar@<-.075ex>[r]
\ar@(r,u)[ddrr]^-(.665){\text{\normalsize$\beta_2$}}
\ar@<.05ex>@(r,u)[ddrr]
\ar@<-.05ex>@(r,u)[ddrr]
\ar@<.075ex>@(r,u)[ddrr]
\ar@<-.075ex>@(r,u)[ddrr]
& 
\ \cdots\ 
\ar[r]^-{\text{\normalsize$\alpha_{s-1}$}} 
\ar@<.05ex>[r]
\ar@<.075ex>[r]\ar@<-.075ex>[r]
\ar@<-.05ex>[r]
& 
*+={\myrule s} 
\ar@/^.2cm/[dd]^-(.255){\text{\normalsize$\beta_s$}}
\ar@<.05ex>@/^.2cm/[dd]
\ar@<-.05ex>@/^.2cm/[dd]
\ar@<.075ex>@/^.2cm/[dd]
\ar@<-.075ex>@/^.2cm/[dd]
\\
\\
&
&
&
*+={\myrule s+1}
}}
\caption{State transition diagram of the Coxian distribution describing the response delay $T_{12}$. $1$ is the entering state and $s+1$ is the absorbing state.}
\vspace{-3mm}
\label{fig:Coxian}
\end{figure}

We allow the response delays to follow a general class of distributions called
Coxian distributions defined as follows. For $\alpha \in \mathbb{R}_+^{s-1}$ and
$\beta\in\mathbb{R}_+^s$, consider the time-homogeneous Markov process having
the state transition diagram in Fig.~\ref{fig:Coxian}. We say that a random
variable follows the Coxian distribution (see, e.g., \cite{Asmussen1996}),
denoted by $C(\alpha,\beta)$, if it is the absorption time of the Markov process
into state $s+1$ starting from state $1$. It is known that the set of Coxian
distributions is dense in the set of positive valued
distributions~\cite{Cox1955}. Moreover, there are efficient fitting algorithms
to approximate a given arbitrary distribution by a Coxian
distribution~\cite{Asmussen1996}. We can now formally state our assumptions on
the response delays:

\begin{asm}
There exist $\alpha, \gamma \in \mathbb{R}_+^{s-1}$ and $\beta, \delta \in
\mathbb{R}_+^s$ such that $T_{12}$ and $T_{21}$ follow the Coxian distributions
$C(\alpha, \beta)$ and $C(\gamma, \delta)$, respectively.
\end{asm}

Combining the Markovian dynamics of the environment $\epsilon$ as well as the
state transition diagrams for the response delays $T_{12}=C(\alpha, \beta)$ and $T_{21} =
C(\gamma, \delta)$, we can easily prove the following proposition.

\begin{prop}\label{prop:isMP}
Consider the time-homogeneous Markov process $\theta$ having the state space
\begin{equation}
S = \{(1,1_0),(2,1_1),\dotsc (2,1_s), (2,2_0), (1,2_1), \dotsc, (1,2_s)\}
\end{equation}
and the state transition diagram in Fig.~\ref{fig:example:aggregatedMarkov}. Assume
that $\theta(0) = (\epsilon(0), \epsilon(0)_0)$. Define the function $f\colon S
\to \{1, 2\}\times \{1, 2\}$ by $f(i, j_k) = (i, j)$ for all $i, j\in \{1, 2\}$
and $k \in \{0, \dotsc, s\}$. Then, $f(\theta) = (\epsilon, \sigma)$.
\begin{figure}
\newcommand{\myrule}{\rule[-.27cm]{0cm}{.7cm}}%
\centering
\centerline{\xymatrix@!C@=1.7em{
*+={\myrule (1,1_0)} 
\ar@<-.5ex>[r]_-{\text{\normalsize$\pi_{12}$}}
&
*+={\myrule (2,1_1)} 
\ar[r]^-{\text{\normalsize$\alpha_1$}}
\ar@<.05ex>[r]
\ar@<-.05ex>[r]
\ar@<.075ex>[r]
\ar@<-.075ex>[r]
\ar@<-.5ex>[l]_(.4){\text{\normalsize$\pi_{21}$}} \ar@(r,lu)[ddrrr]^-(.6){\text{\normalsize$\beta_1$}}
\ar@<.05ex>@(r,lu)[ddrrr]
\ar@<-.05ex>@(r,lu)[ddrrr]
\ar@<.075ex>@(r,lu)[ddrrr]
\ar@<-.075ex>@(r,lu)[ddrrr]
& 
*+={\myrule (2,1_2)} 
\ar[r]^-{\text{\normalsize$\alpha_2$}} 
\ar@<.05ex>[r]
\ar@<-.05ex>[r]
\ar@<.075ex>[r]
\ar@<-.075ex>[r]
\ar@/^-.65cm/[ll]_(.4){\text{\normalsize$\pi_{21}$}}
\ar@(r,u)[ddrr]^-(.665){\text{\normalsize$\beta_2$}}
\ar@<.05ex>@(r,u)[ddrr]
\ar@<-.05ex>@(r,u)[ddrr]
\ar@<.075ex>@(r,u)[ddrr]
\ar@<-.075ex>@(r,u)[ddrr]
& 
\ \cdots\ 
\ar[r]^-{\text{\normalsize$\alpha_{s-1}$}} 
\ar@<.05ex>[r]
\ar@<-.05ex>[r]
\ar@<.075ex>[r]
\ar@<-.075ex>[r]
& 
*+={\myrule (2,1_s)} 
\ar@/^.2cm/[dd]^-(.255){\text{\normalsize$\beta_s$}}
\ar@<.05ex>@/^.2cm/[dd]
\ar@<-.05ex>@/^.2cm/[dd]
\ar@<.075ex>@/^.2cm/[dd]
\ar@<-.075ex>@/^.2cm/[dd]
\ar@/^-1.5cm/[llll]_(.4){\text{\normalsize$\pi_{21}$}}
\\
\\
*+={\myrule (1,2_s)} 
\ar@/^.2cm/[uu]^-(.315){\text{\normalsize$\delta_s$}}
\ar@<.05ex>@/^.2cm/[uu]
\ar@<-.05ex>@/^.2cm/[uu]
\ar@<.075ex>@/^.2cm/[uu]
\ar@<-.075ex>@/^.2cm/[uu]
\ar@/^-1.5cm/[rrrr]_(.4){\text{\normalsize$\pi_{12}$}}
&
\ \cdots\ 
\ar[l]^-{\text{\normalsize$\gamma_{s-1}$}}
\ar@<.05ex>[l]
\ar@<-.05ex>[l]
\ar@<.075ex>[l]
\ar@<-.075ex>[l]
&
*+={\myrule (1,2_2)} 
\ar[l]^-{\text{\normalsize$\gamma_2$}} 
\ar@<.05ex>[l] 
\ar@<-.05ex>[l]
\ar@<.075ex>[l] 
\ar@<-.075ex>[l]
\ar@(l,d)[lluu]^-(.7){\text{\normalsize$\delta_2$}}
\ar@<.05ex>@(l,d)[uull] 
\ar@<-.05ex>@(l,d)[uull]
\ar@<.075ex>@(l,d)[uull] 
\ar@<-.075ex>@(l,d)[uull]
\ar@/^-.65cm/[rr]_(.4){\text{\normalsize$\pi_{12}$}}
&
*+={\myrule (1,2_1)} 
\ar[l]^-{\text{\normalsize$\gamma_1$}} 
\ar@<.05ex>[l] 
\ar@<-.05ex>[l]
\ar@<.075ex>[l] 
\ar@<-.075ex>[l]
\ar@/^.75pc/[uulll]^(.6){\text{\normalsize$\delta_1$}}
\ar@<.05ex>@/^.75pc/[uulll] 
\ar@<-.05ex>@/^.75pc/[uulll]
\ar@<.075ex>@/^.75pc/[uulll] 
\ar@<-.075ex>@/^.75pc/[uulll]
\ar@<-.5ex>[r]_(.4){\text{\normalsize$\pi_{12}$}}
&
*+={\myrule (2,2_0)}
\ar@<-.5ex>[l]_-{\text{\normalsize$\pi_{21}$}}}}
\caption{Markov chain for the dynamics of the pair $(\epsilon, \sigma)$. The thick arrows represent the dynamics of phase-type distributions, while the thin arrows represent changes in the environment.}
\vspace{-3mm}
\label{fig:example:aggregatedMarkov}
\end{figure}
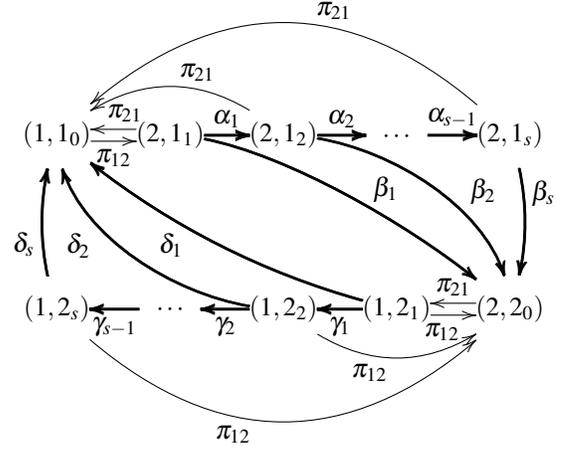
\end{prop}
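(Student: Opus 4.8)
The plan is to realize $(\epsilon,\sigma)$ as the image under $f$ of a Markov process obtained by augmenting the pair $(\epsilon,\sigma)$ with the internal phase of whichever response-delay clock is currently running; the point is that $(\epsilon,\sigma)$ alone is not Markov, since $\sigma$ depends on how long the current Coxian draw has been running, but recording that phase repairs this, and the resulting state space turns out to be exactly $S$. Concretely, at time $t$, if $\epsilon(t)=\sigma(t)$ then no delay is in progress, and I record this by the symbol $1_0$ or $2_0$ according to the common value. If instead $\epsilon(t)\neq\sigma(t)$, say $\epsilon(t)=2$ and $\sigma(t)=1$, then by rule~(1) a draw from $C(\alpha,\beta)$ is underway; since a Coxian variable is by definition the absorption time of the Markov chain in Fig.~\ref{fig:Coxian}, this draw is realized by that very chain, and I record its current phase $k\in\{1,\dots,s\}$ by the symbol $1_k$ (the subscript recording that $\sigma$ is still at its old value). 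Symmetrically, when $\epsilon(t)=1$, $\sigma(t)=2$ the running draw is from $C(\gamma,\delta)$ and I record phase $k$ by $2_k$. Setting $\theta(t)=(\epsilon(t),\cdot_k)$ with $\cdot_k$ the phase symbol just described gives a process on the state space $S$ with $f(\theta(t))=(\epsilon(t),\sigma(t))$ by construction, and $\theta(0)=(\epsilon(0),\epsilon(0)_0)$ since $\sigma(0)=\epsilon(0)$ at the outset.

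Next I would walk through the possible events at each state and verify that they produce exactly the arrows of Fig.~\ref{fig:example:aggregatedMarkov}. From $(1,1_0)$ the only event is the environment jump $1\to2$ at rate $\pi_{12}$, which by rule~(1) starts a fresh $C(\alpha,\beta)$ draw, i.e.\ enters phase $1$, sending $\theta$ to $(2,1_1)$; the state $(2,2_0)$ is symmetric with $\pi_{21}$ and $C(\gamma,\delta)$. From a state $(2,1_k)$ three competing exponential clocks run: phase advance at rate $\alpha_k$ to $(2,1_{k+1})$ (absent when $k=s$); absorption at rate $\beta_k$, which means $T_{12}$ elapses while $\epsilon$ is still $2$, so by rule~(2) $\sigma$ jumps to $2$ and $\theta$ moves to $(2,2_0)$; and the environment jump $2\to1$ at rate $\pi_{21}$, which by rule~(3) discards the draw and restores $\epsilon=\sigma=1$, sending $\theta$ to $(1,1_0)$. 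The states $(1,2_k)$ are handled identically, with $(\gamma,\delta)$ replacing $(\alpha,\beta)$ and the roles of $1$ and $2$ exchanged, and absorption now sending $\theta$ to $(1,1_0)$. These are precisely the transitions drawn.

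The one genuine point is the Markov property: I must check that the rates above depend only on the present value of $\theta$. The environment is Markov by hypothesis, so its jump rate depends only on $\epsilon(t)$; the phase of the active delay evolves by the Markov chain underlying its Coxian distribution, so its advance and absorption rates depend only on that phase; and the rule defining $\sigma$ reads off only the present $\epsilon$, $\sigma$, and phase. The only place memory could enter is the re-drawing in rule~(1): each time the environment moves into a configuration with $\epsilon\neq\sigma$, a \emph{new, independent} $T_{ij}$ is drawn, and this is faithfully captured because the phase chain is restarted from its initial state~$1$, carrying no memory of earlier excursions --- here the independence stipulated in rule~(1) is exactly what is needed. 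It follows that $\theta$ is a time-homogeneous Markov process with the stated generator and that $f(\theta)=(\epsilon,\sigma)$. I expect the only mild difficulty to be bookkeeping: confirming that at every state each environment-change, absorption, and phase-advance event is routed to the correct element of $S$, and that no further history is ever consulted.
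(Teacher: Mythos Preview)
Your proposal is correct and is essentially the argument the paper has in mind: the paper does not give a written proof of this proposition at all, merely remarking beforehand that it follows by ``combining the Markovian dynamics of the environment $\epsilon$ as well as the state transition diagrams for the response delays $T_{12}=C(\alpha,\beta)$ and $T_{21}=C(\gamma,\delta)$.'' Your write-up spells out exactly this combination --- augmenting $(\epsilon,\sigma)$ by the current phase of the active Coxian clock, reading off the transitions from rules (1)--(3), and noting that independence of successive draws is what makes the restart at phase~$1$ memoryless --- so there is nothing to compare; you have supplied the details the paper omits.
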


From Proposition~\ref{prop:isMP}, we can represent the population dynamics $\Sigma_2$ as 
${dx}/{dt} = A_{f(\theta(t))}x(t)$, 
where, for each $i, j\in \{1, 2\}$, the matrix $A_{(i, j)}$ is defined by 
\begin{equation}\label{eq:def:A_(ij)}
A_{(i, j)} = \begin{bmatrix}
g_i^1 - \omega_j^{12} & \omega_j^{21}
\\
\omega_j^{12} & g_i^2 - \omega_j^{21}
\end{bmatrix}. 
\end{equation}
We now present the second main result of this paper, which gives the growth rate
of $\Sigma_2$:
 
\begin{thm}\label{thm:rate:responseDelay}
Define $\Xi\in\mathbb{R}^{(2s+2)\times (2s+2)}$ by
\begin{equation*}
\Xi = 
\begin{bmatrix}
-\pi_{12}&\pi_{12}u_1^\top&0&O_{1,s}
\\
\pi_{21}\onev_s&\!\!\!\!\!\!\Xi_\alpha-\pi_{21}I_s-\diag(\beta)\!\!\!\!\!\!\!\!\!\!&\beta&O_{s,s}
\\
O_{s,1}&O_{s,s}&-\pi_{21}&\pi_{21}u_1^\top
\\
\delta&O_{1,s}&\pi_{12}\onev_s&\!\!\!\!\!\!\! \Xi_\gamma-\diag(\delta)-\pi_{12}I_s
\end{bmatrix}, 
\end{equation*}
where
\begin{equation*}
\begin{aligned}
\Xi_\alpha &= \begin{bmatrix}
O_{s-1,1}&\diag(\alpha)
\\
0&O_{1,s-1}
\end{bmatrix} - \ \begin{bmatrix}
\diag(\alpha)&O_{s-1,1}
\\
O_{1,s-1}&0
\end{bmatrix}
\end{aligned}
\end{equation*}
and $\Xi_\gamma$ is defined in the same manner.
Then, the growth rate of $\Sigma_2$ equals the maximum real eigenvalue of the
matrix
\begin{equation}\label{eq:theBigMatrix}
\Xi^\top \otimes I_2+ \bigoplus(\bar A_{(1,1)}, I_s\otimes \bar A_{(2,1)}, \bar A_{(2,2)}, I_s\otimes \bar A_{(1,1)}).
\end{equation}
\afterequation
\end{thm}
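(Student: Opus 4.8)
The plan is to recognize $\Sigma_2$ as an autonomous, delay-free positive Markov jump linear system (MJLS) whose mode process is the Markov chain $\theta$ of Proposition~\ref{prop:isMP}, and then to apply the eigenvalue characterization of the growth rate of such systems, i.e.\ the delay-free specialization of the machinery of Section~\ref{sec:delayedProlife}.

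First I would carry out the reduction. By Proposition~\ref{prop:isMP}, $\Sigma_2$ obeys $dx/dt=A_{f(\theta(t))}x(t)$ with $\theta$ a time-homogeneous Markov chain on the $(2s+2)$-element set $S$, and every matrix $A_{(i,j)}$ in~\eqref{eq:def:A_(ij)} is Metzler because its off-diagonal entries $\omega_j^{12},\omega_j^{21}$ are nonnegative; hence this is a positive MJLS. Since $A_{f(p)}$ depends on $p\in S$ only through $f$, listing the states of $S$ in the order fixed in Proposition~\ref{prop:isMP} and using $I_s\otimes A=\bigoplus_{k=1}^s A$ shows that $\bigoplus_{p\in S}A_{f(p)}$ is precisely the second summand of the matrix in~\eqref{eq:theBigMatrix}.

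Second I would identify the infinitesimal generator of $\theta$ with $\Xi$. Partition $S$ into the four blocks $\{(1,1_0)\}$, $\{(2,1_1),\dots,(2,1_s)\}$, $\{(2,2_0)\}$, $\{(1,2_1),\dots,(1,2_s)\}$ used to write $\Xi$. Reading Fig.~\ref{fig:example:aggregatedMarkov}, the thin (environment) arrows carry rate $\pi_{12}$ or $\pi_{21}$ from each block-head either into the head of the relevant phase-type block or back to the matching diagonal state (the $\pi_{12}u_1^\top$, $\pi_{21}\onev_s$, $\pi_{21}u_1^\top$, $\pi_{12}\onev_s$ entries), while the thick (Coxian) arrows carry the phase-advance rates, assembling $\Xi_\alpha$ and $\Xi_\gamma$, together with the absorption rates leading to the opposite diagonal state (the $\beta$ and $\delta$ columns). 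Moving every outgoing rate onto the diagonal so that each row sums to zero reproduces $\Xi$ exactly. This transcription of the diagram asserted in Proposition~\ref{prop:isMP} is the only genuinely computational step.

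Finally I would apply the growth-rate characterization for positive MJLS, running the argument of Section~\ref{sec:delayedProlife} with $\theta$, $\Xi$ in place of $\epsilon$, $\Pi$ and with no delay terms. Putting $\zeta(t)=E[\tilde\eta(t)\otimes x(t)]$, where $\tilde\eta(t)$ is the canonical basis vector indexed by the current mode $\theta(t)$, Lemma~\ref{lem:presevrve} gives $E[\norm{x(t)}]=\norm{\zeta(t)}$, and the delay-free analogue of Proposition~\ref{prop:barx} gives $d\zeta/dt=M\zeta(t)$ with $M=\Xi^\top\otimes I_2+\bigoplus_{p\in S}A_{f(p)}$ the matrix of~\eqref{eq:theBigMatrix}, hence $\zeta(t)=e^{Mt}\zeta(0)$ with $\norm{\zeta(0)}=\norm{x(0)}$. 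Because $M$ is Metzler its spectral abscissa equals its maximal real eigenvalue $\mu$, so $\norm{e^{Mt}}\leq C_\lambda e^{\lambda t}$ for every $\lambda>\mu$, giving $\lambda$-exponential stability for all $\lambda>\mu$ and hence growth rate $\leq\mu$; conversely, a nonnegative left eigenvector $v^\top M=\mu v^\top$ (available since $M^\top$ is Metzler) yields $\norm{\zeta(t)}\geq v^\top e^{Mt}\zeta(0)/\norm{v}_\infty=e^{\mu t}v^\top\zeta(0)/\norm{v}_\infty$, so choosing $x(0)\geq0$ with a nonzero component on the block of $v$ indexed by $\theta(0)$ precludes $\lambda$-exponential stability for $\lambda<\mu$, giving growth rate $\geq\mu$. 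The point I expect to need the most care is that Proposition~\ref{prop:isMP} admits only the initial modes $\theta(0)\in\{(1,1_0),(2,2_0)\}$, so the lower bound must be realizable from one of these; this holds because $\theta$ is irreducible under the natural assumption that $\pi_{12},\pi_{21}$ and the Coxian exit rates are positive, whence $v>0$. Alternatively one may simply invoke \cite[Theorem~5.1]{Ogura2013f}, which already packages this equality for positive MJLS. Everything else reduces to the bookkeeping of the second step and this standard application.
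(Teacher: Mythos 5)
Your proposal is correct and follows essentially the same route as the paper: reduce $\Sigma_2$ via Proposition~\ref{prop:isMP} to a delay-free positive Markov jump linear system with mode process $\theta$, verify that $\Xi$ is its infinitesimal generator and that the direct sum over $S$ of the $A_{f(p)}$ is the second summand of \eqref{eq:theBigMatrix}, and then invoke the growth-rate characterization for positive MJLS from \cite{Ogura2013f}. The only difference is that you additionally sketch a self-contained proof of that cited characterization (via $\zeta(t)=E[\tilde\eta(t)\otimes x(t)]$ and the Perron left eigenvector), whereas the paper simply cites \cite[Theorem~5.2]{Ogura2013f}.
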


\begin{proof}
It is easy to see that $\Xi$ defined in the theorem gives the infinitesimal
generator of the Markov process~$\theta$. Therefore, by
\cite[Theorem~5.2]{Ogura2013f}, the growth rate of $\Sigma_2$ equals
the maximum real eigenvalue of the Metzler matrix
\begin{equation}
\begin{multlined}[.85\linewidth]
\Xi^\top \otimes I_2 + \bigoplus(A_{f(1,1_0)}, A_{f(2,1_1)}, \dotsc, A_{f(2,1_s)},\\A_{f(2,2_0)}, A_{f(1,2_1)}, \dotsc, A_{f(1,2_s)}).
\end{multlined}
\end{equation}
The direct sum in this matrix equals the second term of \eqref{eq:theBigMatrix}
since $A_{f(2,1_k)} = A_{(2,1)}$ and $A_{f(1,2_k)} = A_{(1,2)}$ for all $k$ by
the definition of the matrices $A_{(i,j)}$ in \eqref{eq:def:A_(ij)}. This
completes the proof of the theorem.
\end{proof}

\section{Numerical Simulations \label{sec:Numerical-Simulations}}

In this section, we present numerical simulations to illustrate the results
obtained in the previous sections. For simplicity of presentation, we focus on
the case $n=2$; i.e., there are only two phenotypes in the population under
consideration. We use the parameters $g_1^1=1$, $g_1^2 = 0.05$, $g_2^1=-2$, and
$g_2^2 = 0.95$. These parameters indicate that the phenotypes 1 and 2 are fitted
to the environment 1 and 2, respectively. We set the phenotypic transition
rates as $\omega_1^{12}=\omega_{2}^{21} = 0.1$ and $\omega_1^{21} =
\omega_2^{12}= 1$.

First, we illustrate Theorem~\ref{thm:rate} for the case of delayed
proliferation. We consider only point delays; therefore, it is assumed that
$q_i^k(t)\equiv 0$ for all $i, k\in \{1, 2\}$. Furthermore, we assume that both
the delays and the rate of delayed proliferation are homogeneous, that is, there
exist $d\geq 0$ and $p\geq 0$ such that $d_i^k = d$ and $p_i^k = p$ for every
$i, k\in\{1,2  \}$. We set the initial state as $\phi(t)=[1\ 1]^\top$ for every
$t\in [-d,0]$. Using Theorem~\ref{thm:rate} and bisection search, we compute the
suboptimal upper bounds on the growth rates of~$\Sigma_1$ for $p =2.5$ and $d
\in [0, 5]$. To examine the accuracy of the upper bounds, we numerically compute
the quantity $50^{-1}\log(E[\norm{x(50)}]/\norm{\phi})$ using 500 sample paths
for each pair of $(d, p)$. The above two quantities are shown in
Fig~\ref{fig:staterates}. Their relative differences are less than 10\%, showing
the accuracy of the upper bounds by Theorem~\ref{thm:rate}. We have also
confirmed that, as $d\to\infty$ or $p\to0$, the upper bounds approach to the
common value $0.6863$, which equals the growth rate of the population model
$\Sigma_0$ without delays.

\begin{figure}[tb]
\vspace{1mm}
\centering
\includegraphics[trim=1cm 0 1cm 0,width=\linewidth]{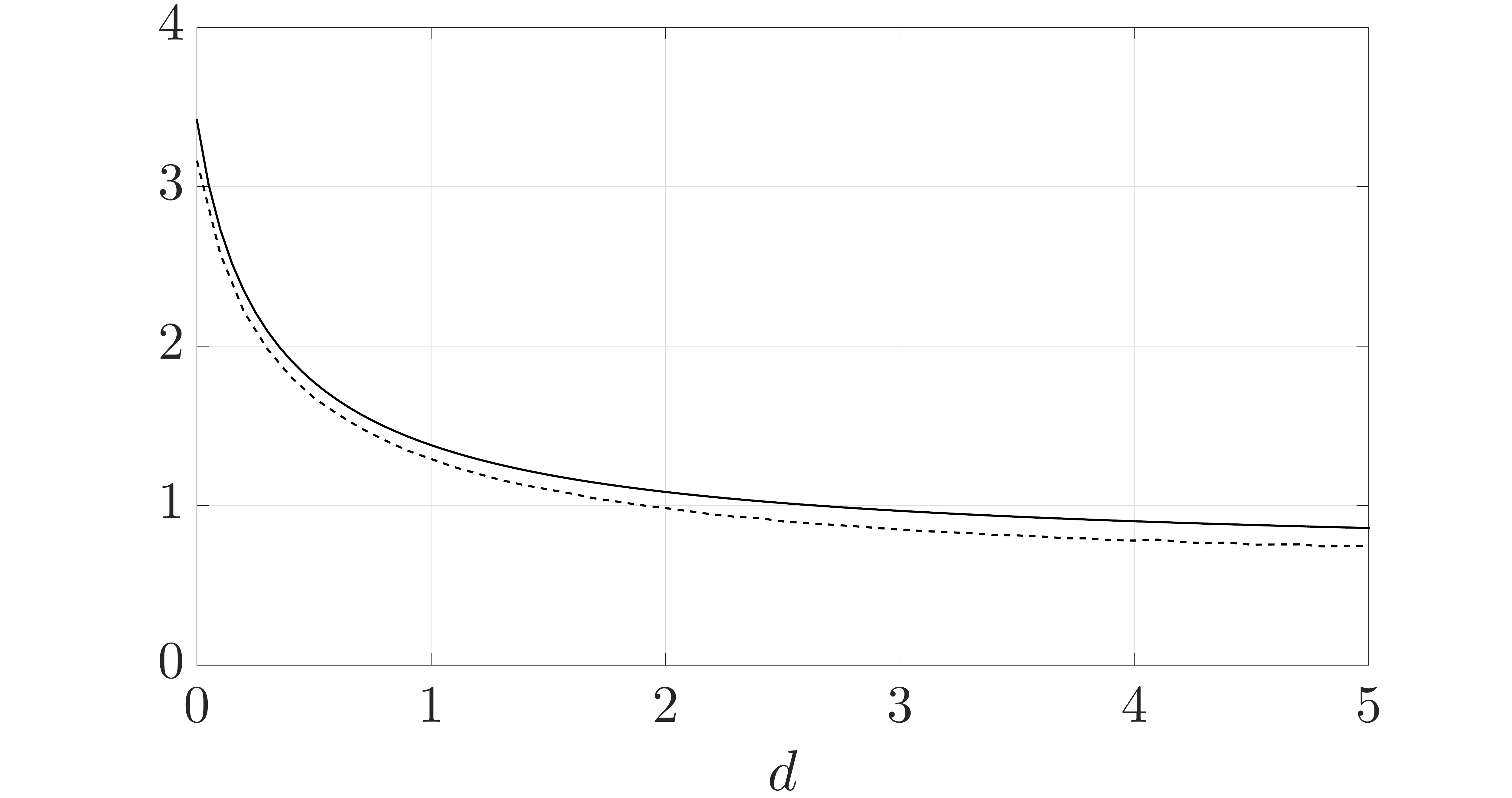}
\caption{Growth rates of $\Sigma_1$. Solid: Upper bounds (Theorem~\ref{thm:rate}). Dashed: Sample averages of $T^{-1}\log(\norm{x(T)}/\norm{\phi})$ with $T=50$.}
\label{fig:staterates}
\end{figure}

We then focus on delayed adaptation studied in
Section~\ref{sec:delayedResponse}. Assume that delays $X_{12}$ and $X_{21}$ both
follow the Erlang distribution with shape $k$ and mean $\mu$. This distribution
is the $k$-sum of independent exponential distributions with mean $\mu/k$ and,
therefore, approximates the normal distribution with mean $\mu$ and the variance
$\mu^2/k$ when $k$ is large. From this fact, we can also see that the Erlang
distribution is a Coxian distribution having the parameters $s=k$, $\alpha_1 =
\cdots = \alpha_{k-1} = \beta_{k} = \lambda = k/\mu$, and $\beta_1 = \cdots =
\beta_{k-1} = 0$.  Using Theorem~\ref{thm:rate:responseDelay}, we compute the
growth rate of $\Sigma_2$ when $\mu$ varies over the interval $[0, 10]$. We have
used $k=100$ in this simulation. We show the obtained growth rates in
Fig.~\ref{fig:growthrates}. We have confirmed the following limit phenomena.
First, as $\mu$ tends to zero, the growth rate approaches to that of the
population model $\Sigma_0$ without delay. Second, as $\mu$ tends to $\infty$,
the growth rate approaches to that of the population model without adaptation,
as expected.

\begin{figure}[tb]
\vspace{1mm}
\centering
\includegraphics[trim=1cm 0 1cm 0,width=\linewidth]{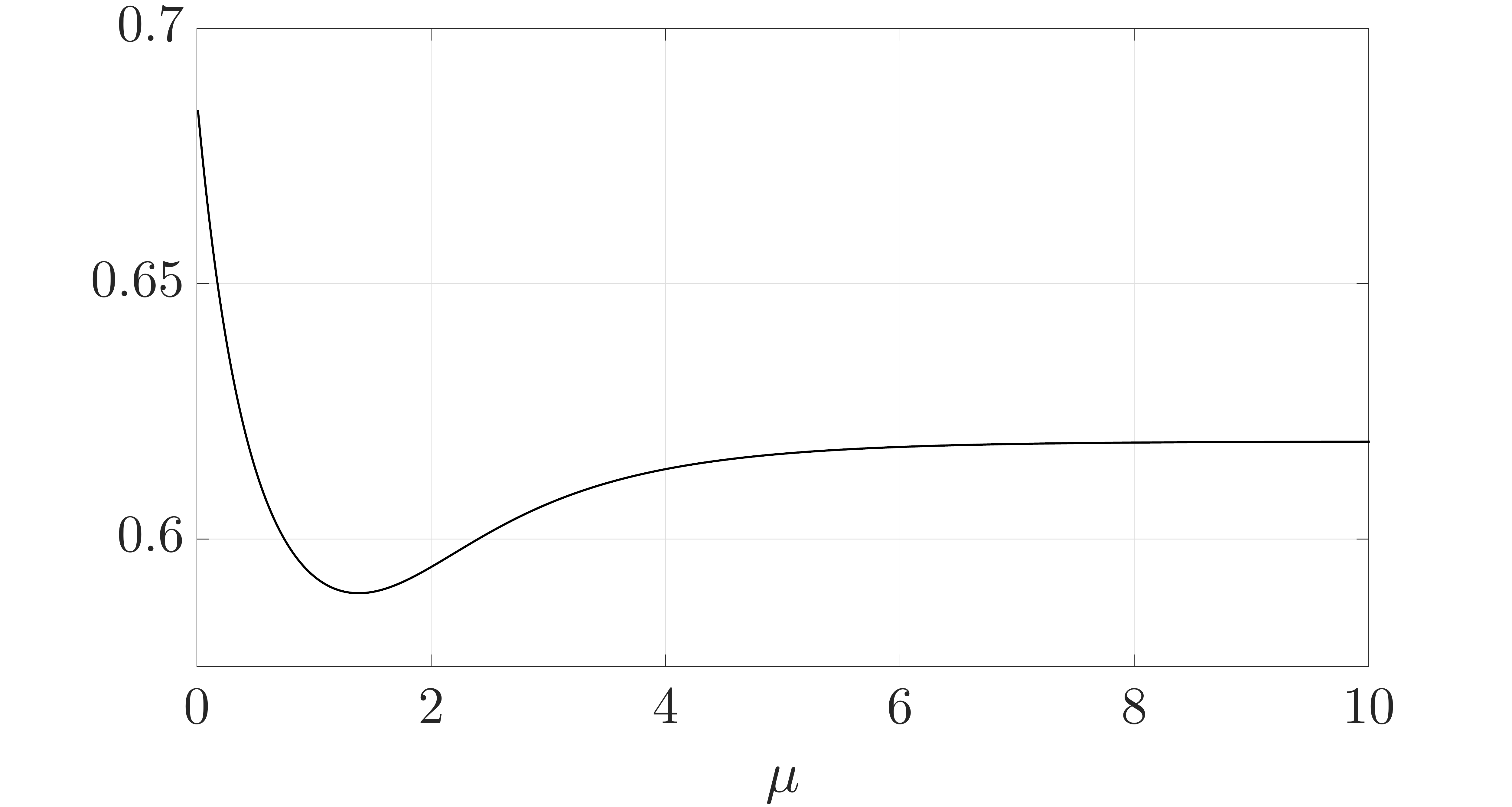}
\caption{Growth rate of $\Sigma_2$ versus $\mu$}
\vspace{-3mm}
\label{fig:growthrates}
\end{figure}

\section{Conclusion \label{sec:Conclusion}}

In this paper, we have studied the growth rate of bet-hedging populations
experiencing delays and environmental changes. By modeling the population
dynamics using positive Markov jump linear systems with delays, we have shown
that the growth rates can be upper-bounded by the maximum real eigenvalue of
Metzler matrices. In particular, in the case of adaptation delays, the upper
bounds give the exact value of the growth rates. We have confirmed the
effectiveness of the proposed methods via numerical simulations.


\begin{thebibliography}{10}
\providecommand{\url}[1]{#1}
\csname url@samestyle\endcsname
\providecommand{\newblock}{\relax}
\providecommand{\bibinfo}[2]{#2}
\providecommand{\BIBentrySTDinterwordspacing}{\spaceskip=0pt\relax}
\providecommand{\BIBentryALTinterwordstretchfactor}{4}
\providecommand{\BIBentryALTinterwordspacing}{\spaceskip=\fontdimen2\font plus
\BIBentryALTinterwordstretchfactor\fontdimen3\font minus
  \fontdimen4\font\relax}
\providecommand{\BIBforeignlanguage}[2]{{%
\expandafter\ifx\csname l@#1\endcsname\relax
\typeout{** WARNING: IEEEtran.bst: No hyphenation pattern has been}%
\typeout{** loaded for the language `#1'. Using the pattern for}%
\typeout{** the default language instead.}%
\else
\language=\csname l@#1\endcsname
\fi
#2}}
\providecommand{\BIBdecl}{\relax}
\BIBdecl

\bibitem{Kussell2005}
E.~Kussell and S.~Leibler, ``{Phenotypic diversity, population growth, and
  information in fluctuating environments.}'' \emph{Science}, vol. 309, pp.
  2075--2078, 2005.

\bibitem{Acar2008}
M.~Acar, J.~T. Mettetal, and A.~van Oudenaarden, ``{Stochastic switching as a
  survival strategy in fluctuating environments.}'' \emph{Nature Genetics},
  vol.~40, pp. 471--475, 2008.

\bibitem{Saether2015}
B.-E. S{\ae}ther and S.~Engen, ``{The concept of fitness in fluctuating
  environments},'' \emph{Trends in Ecology \& Evolution}, vol.~30, pp.
  273--281, 2015.

\bibitem{Seger1987}
J.~Seger and H.~J. Brockmann, ``{What is bet-hedging?}'' in \emph{Oxford
  Surveys in Evolutionary Biology}.\hskip 1em plus 0.5em minus 0.4em\relax
  Oxford University Press, 1987, vol.~4, pp. 182--211.

\bibitem{Oppenheim2005}
A.~B. Oppenheim, O.~Kobiler, J.~Stavans, D.~L. Court, and S.~Adhya, ``{Switches
  in bacteriophage lambda development},'' \emph{Annual Review of Genetics},
  vol.~39, pp. 409--429, 2005.

\bibitem{Gremer2014}
J.~R. Gremer and D.~L. Venable, ``{Bet hedging in desert winter annual plants:
  optimal germination strategies in a variable environment},'' \emph{Ecology
  Letters}, vol.~17, pp. 380--387, 2014.

\bibitem{Woude2004}
M.~W. van~der Woude and A.~J. Baumler, ``{Phase and antigenic variation in
  bacteria},'' \emph{Clinical Microbiology Reviews}, vol.~17, pp. 581--611,
  2004.

\bibitem{Baker1998}
C.~T.~H. Baker, G.~A. Bocharov, C.~A.~H. Paul, and F.~A. Rihan, ``{Modelling
  and analysis of time-lags in some basic patterns of cell proliferation.}''
  \emph{Journal of mathematical biology}, vol.~37, pp. 341--71, 1998.

\bibitem{Stumpf2002}
M.~P.~H. Stumpf, Z.~Laidlaw, and V.~A.~A. Jansen, ``{Herpes viruses hedge their
  bets},'' \emph{Proceedings of the National Academy of Sciences USA}, vol.~99,
  pp. 15\,234--7, 2002.


\bibitem{Belete2015}
M.~K. Belete and G.~Bal{\'{a}}zsi, ``{Optimality and adaptation of
  phenotypically switching cells in fluctuating environments},'' \emph{Physical
  Review E}, vol.~92, p. 62716, 2015.

\bibitem{Muller2013a}
J.~M\"uller, B.~Hense, T.~Fuchs, M.~Utz, and C.~P{\"{o}}tzsche, ``{Bet-hedging
  in stochastically switching environments},'' \emph{Journal of Theoretical
  Biology}, vol. 336, pp. 144--157, 2013.

\bibitem{Gaal2010}
B.~Ga\'al, J.~W. Pitchford, and A.~J. Wood, ``{Exact results for the evolution
  of stochastic switching in variable asymmetric environments.}''
  \emph{Genetics}, vol. 184, pp. 1113--9, 2010.


\bibitem{Thattai2004}
M.~Thattai and A.~{Van Oudenaarden}, ``{Stochastic gene expression in
  fluctuating environments},'' \emph{Genetics}, vol. 167, pp. 523--530, 2004.

\bibitem{Ogura2013f}
M.~Ogura and C.~F. Martin, ``{Stability analysis of positive semi-Markovian
  jump linear systems with state resets},'' \emph{SIAM Journal on Control and
  Optimization}, vol.~52, pp. 1809--1831, 2014. 

\bibitem{Qi2015a}
W.~Qi and X.~Gao, ``{$L_1$ control for positive Markovian jump systems with
  time-varying delays and partly known transition rates},'' \emph{Circuits,
  Systems, and Signal Processing}, vol.~34, pp. 2711--2726, 2015.

\bibitem{Jiao2015}
L.~Jiao, H.~Jianjun, L.~Jie, and Z.~Yan, ``{Stochastic stability and
  stabilization for positive Markov jump systems with distributed time delay
  and incomplete known transition rates},'' in \emph{27th Chinese Control and
  Decision Conference}, 2015, pp. 2389--2394.

\bibitem{Brewer1978}
J.~Brewer, ``{Kronecker products and matrix calculus in system theory},''
  \emph{IEEE Transactions on Circuits and Systems}, vol.~25, pp. 772--781,
  1978.

\bibitem{Ngoc2013}
P.~H.~A. Ngoc, ``{Stability of positive differential systems with delay},''
  \emph{IEEE Transactions on Automatic Control}, vol.~58, pp. 203--209, 2013.

\bibitem{Borkar1995}
V.~S. Borkar, \emph{{Probability Theory}}.\hskip 1em plus 0.5em minus
  0.4em\relax Springer-Verlag New York, 1995.

\bibitem{Brockett2009}
\BIBentryALTinterwordspacing
R.~W. Brockett, ``{Stochastic Control},'' 2009. [Online]. Available:
  \url{http://www.eeci-institute.eu/pdf/M015/RogersStochastic.pdf}
\BIBentrySTDinterwordspacing

\bibitem{Cox1955}
D.~R. Cox, ``{A use of complex probabilities in the theory of stochastic
  processes},'' \emph{Mathematical Proceedings of the Cambridge Philosophical
  Society}, vol.~51, pp. 313--319, 1955.

\bibitem{Asmussen1996}
S.~Asmussen, O.~Nerman, and M.~Olsson, ``{Fitting phase-type distributions via
  the EM algorithm},'' \emph{Scandinavian Journal of Statistics}, vol.~23, pp.
  419--441, 1996.

\end{thebibliography}
\end{document}